\DeclarePairedDelimiter\ceil{\lceil}{\rceil}
\DeclarePairedDelimiter\abs{\lvert}{\rvert}
\DeclarePairedDelimiter\norm\lVert\rVert
\DeclareMathAlphabet{\mathcal}{OMS}{cmsy}{m}{n}
\newcommand{\eg}[0]{\emph{e.g.,}\xspace}
\newcommand{\ie}[0]{\emph{i.e.,}\xspace}
\newcommand{\cf}[0]{\emph{cf.}\xspace}
\newcommand{\etal}[0]{\emph{et~al.}\xspace}
\newcommand{\term}[1]{\emph{#1}}
\newcommand{\eat}[1]{}
\newcommand{\zero}[0]{\ensuremath{\mathbf{0}}\xspace}
\newcommand{\Hb}[0]{\ensuremath{\mathbf{H}}\xspace} % Histogram vector
\newcommand{\Hn}[0]{\ensuremath{\mathbf{H'}}\xspace} % Noisy Histogram vector
\newcommand{\Fset}[0]{\ensuremath{\mathcal{F}}\xspace} % Faces set
\newcommand{\Eset}[0]{\ensuremath{\mathcal{E}}\xspace} % Edges set
\newcommand{\Vset}[0]{\ensuremath{\mathcal{V}}\xspace} % Vertices set
\newcommand{\Hset}[0]{\ensuremath{\mathcal{H}}\xspace} % Histograms set
\newcommand{\Hc}[0]{\ensuremath{\mathbf{H''}}\xspace} % Consistent Histogram vector
\newcommand{\Hr}[0]{\ensuremath{\mathbf{H'''}}\xspace} % Rounded Histogram vector
\newcommand{\h}[1]{\ensuremath{H_{#1}}\xspace} % Histogram component
\newcommand{\hn}[1]{\ensuremath{H'_{#1}}\xspace} % Noisy Histogram component
\newcommand{\hc}[1]{\ensuremath{H''_{#1}}\xspace} % Consistent Histogram component
\newcommand{\fc}[1]{\ensuremath{F''_{#1}}\xspace} % Consistent Face component
\newcommand{\ec}[1]{\ensuremath{E''_{#1}}\xspace} % Consistent Edge component
\newcommand{\vc}[1]{\ensuremath{V''_{#1}}\xspace} % Consistent Vertex component
\newcommand{\hr}[1]{\ensuremath{H'''_{#1}}\xspace} % Rounded Histogram component
\newcommand{\M}[0]{\ensuremath{\mathcal{M}}\xspace}
\newcommand{\Prob}[1]{\ensuremath{\mathrm{Pr}(#1)}}
\newcommand{\DP}[0]{\emph{DiffPriv}\xspace}
\newcommand{\LP}[0]{\emph{LinProg}\xspace}
\newcommand{\Eu}[0]{\emph{Euler}\xspace}
\newcommand{\R}[0]{\emph{Round}\xspace}
\newcommand{\dep}[0]{distinct-entry counting problem\xspace}
\newcommand{\dop}[0]{distinct-object counting problem\xspace}
\newcommand{\alice}{\ensuremath{\mathcal{A}}\xspace}
\newcommand{\bob}{\ensuremath{\mathcal{B}}\xspace}
\newtheorem{constraint}{Constraint}
\journalname{Knowledge and Information Systems (KAIS)}
\begin{document}
	\sloppy

\title{Differentially-Private Counting of Users' Spatial Regions}

%\author[Maryam Fanaeepour and Benjamin I. P. Rubinstein]{Maryam Fanaeepour$^1$$^{,2}$ and Benjamin I. P. Rubinstein$^1$\\ $^1$School of Computing and Information Systems, University of Melbourne, Parkville, VIC 3010, Australia; \\$^2$Data61, CSIRO, Australia}

\author{Maryam Fanaeepour  \and Benjamin I. P. Rubinstein }

\institute{Maryam Fanaeepour (\Letter) \at
	School of Computing and Information Systems,\\
	University of Melbourne, Parkville, VIC 3052, Australia\\
	Data61, CSIRO, Australia\\
	\email{maryamf@student.unimelb.edu.au}           %  \\
	%             \emph{Present address:} of F. Author  %  if needed
	%			  \orcid{0000-0003-2654-982X}
	\and
	Benjamin I. P. Rubinstein \at
	School of Computing and Information Systems,\\
	University of Melbourne, Parkville, VIC 3052, Australia\\
	\email{brubinstein@unimelb.edu.au}
	%			\orcid{0000-0002-2947-6980}
}
 
\date{Received: date / Accepted: date}
% The correct dates will be entered by the editor 
\maketitle

\begin{abstract}
Mining of spatial data is an enabling technology for mobile services, Internet-connected cars, and the Internet of Things. But the very distinctiveness of spatial data that drives utility, can cost user privacy. Past work has focused upon points and trajectories for differentially-private release. In this work, we continue the tradition of privacy-preserving spatial analytics, focusing not on point or path data, but on planar spatial regions. Such data represents the area of a user's most frequent visitation---such as ``around home and nearby shops''. Specifically we consider the differentially-private release of data structures that support range queries for counting users' spatial regions.  
Counting planar regions leads to unique challenges not faced in existing work.
A user's spatial region that straddles multiple data structure cells can lead to duplicate counting at query time. We provably avoid this pitfall by leveraging the Euler characteristic for the first time with differential privacy. To address the increased sensitivity of range queries to spatial region data, we calibrate privacy-preserving noise using bounded user region size and a constrained inference that uses robust least absolute deviations. Our novel constrained inference reduces noise and promotes covertness by (privately) imposing consistency. We provide a full end-to-end theoretical analysis of both differential privacy and high-probability utility for our approach using concentration bounds. A comprehensive experimental study on several real-world datasets establishes practical validity.
\end{abstract}

\keywords{Differential Privacy \and Euler Histograms \and Location Privacy \and Spatial~Regions}

\section{Introduction}\label{sec:intro}

The ubiquity, quality and usability of location-based services supports the ready availability of user tracking.
Location data sharing is used across a wide range of applications such as traffic monitoring, facility location planning, recommendation systems and contextual advertising. The distinctiveness of location data, however, has led to calls for location privacy~\cite{Beresford03,Gruteser2004}: the ability to track users in aggregate without breaching individual privacy. There exists a spectrum of approaches to address location privacy~\cite{Chow2011,Chow2011Trajectory,Ghinita2013,Krumm07} with significant attention having been paid to range queries on point location or trajectory data: for example, providing statistics of how many mobile users are presently on an arterial road. 

Typical private spatial analytics supports point locations or sequences of points (see Figure~\ref{fig:map_points}).
Points and trajectories, however, do not best-represent user location in all applications. 
In facility-services planning, a planner may wish to locate a new department store in a location that overlaps with users' regions of frequent visitation. While hotel-booking sites collect area-level information about customers' preferred destinations. 
Such problems motivate our focus on counting private planar bodies\footnote{We use \term{body} and \term{region} interchangeably to refer to a user's spatial area. We use the term \term{body} to distinguish query regions from users' regions.} (see Figure~\ref{fig:map_bodies}). 
Given a collection of privacy-sensitive planar bodies representing regions of frequent location, we wish to support counting range queries while preserving individual privacy.
Figure~\ref{fig:map_bodies} illustrates this task, on a map of metropolitan Melbourne with planar bodies representing regions of individual users' frequent visitation. 
Third parties may wish to submit any number of queries requesting the number of users' areas falling in a specified query region, \eg for urban transport planning or retail analytics. 

A leading approach for responding to range queries in spatial data analytics is aggregation~\cite{Papadias2001,Papadias2002,Tao2002,Tao2004,Lopez2005,Braz2007,Marketos2008,LeonardiORRSAA14,Timko2009}. Initial interest in aggregation was due to computational efficiency considerations and early data structures promote these properties. More recently aggregation has been used as a qualitative approach to privacy, as it is a natural choice for privacy-preserving data release~\cite{Chawla2005}.

In the setting of planar bodies, conventional grid-partitioned histograms 
cannot provide accurate results due to the \term{duplicate counting}\footnote{In the literature, the terms \term{multiple}, \term{double} or \term{distinct} counting are used interchangeably. We suggest the term ``duplicate" as it conveys that objects are over-counted.} problem as a planar body may span more than one histogram cell simultaneously. This is a problem unique to counting planar bodies. To address this challenge, we instead leverage the Euler characteristic~\cite{trudeau1993introduction} 
where face, edge and vertex counts are stored separately. Such Euler histograms~\cite{Beigel1998} permit exact counting of convex planar bodies~\cite{SunAA02,Sun2002,Sun2006} (\cf Section~\ref{sec:prelim-euler} and Figure~\ref{fig:euler_histograms}).

\begin{figure*}
	\centering
	\subfloat[ Point locations or sequence of points (trajectories) as typical spatial representatives of a user.\label{fig:map_points}]{
		\includegraphics[scale = .14]{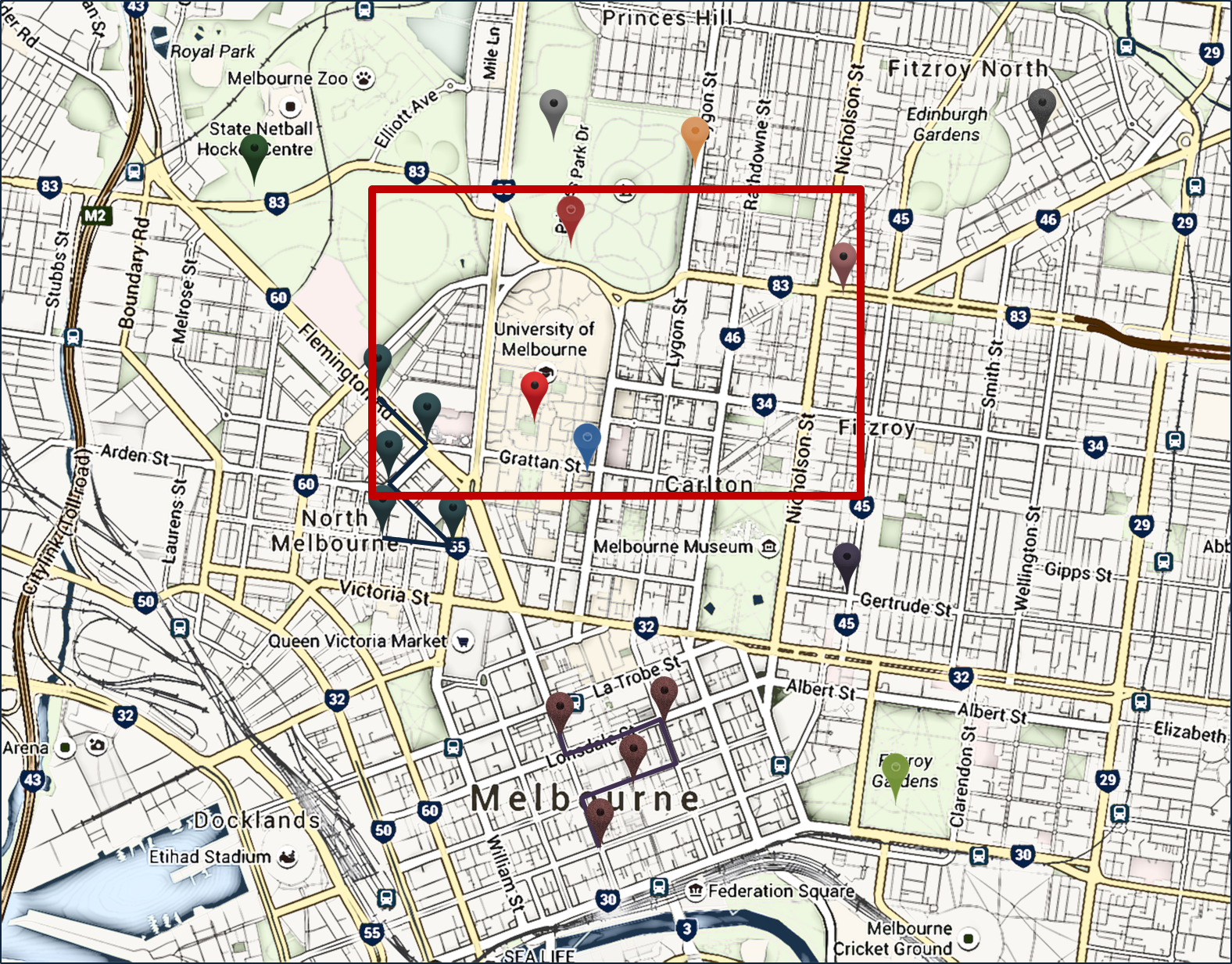}}
	\hfill
	\subfloat[ Users' regions of frequent visitation; each user's spatial data is represented by a single planar body.\label{fig:map_bodies}]{
		\includegraphics[scale = .14]{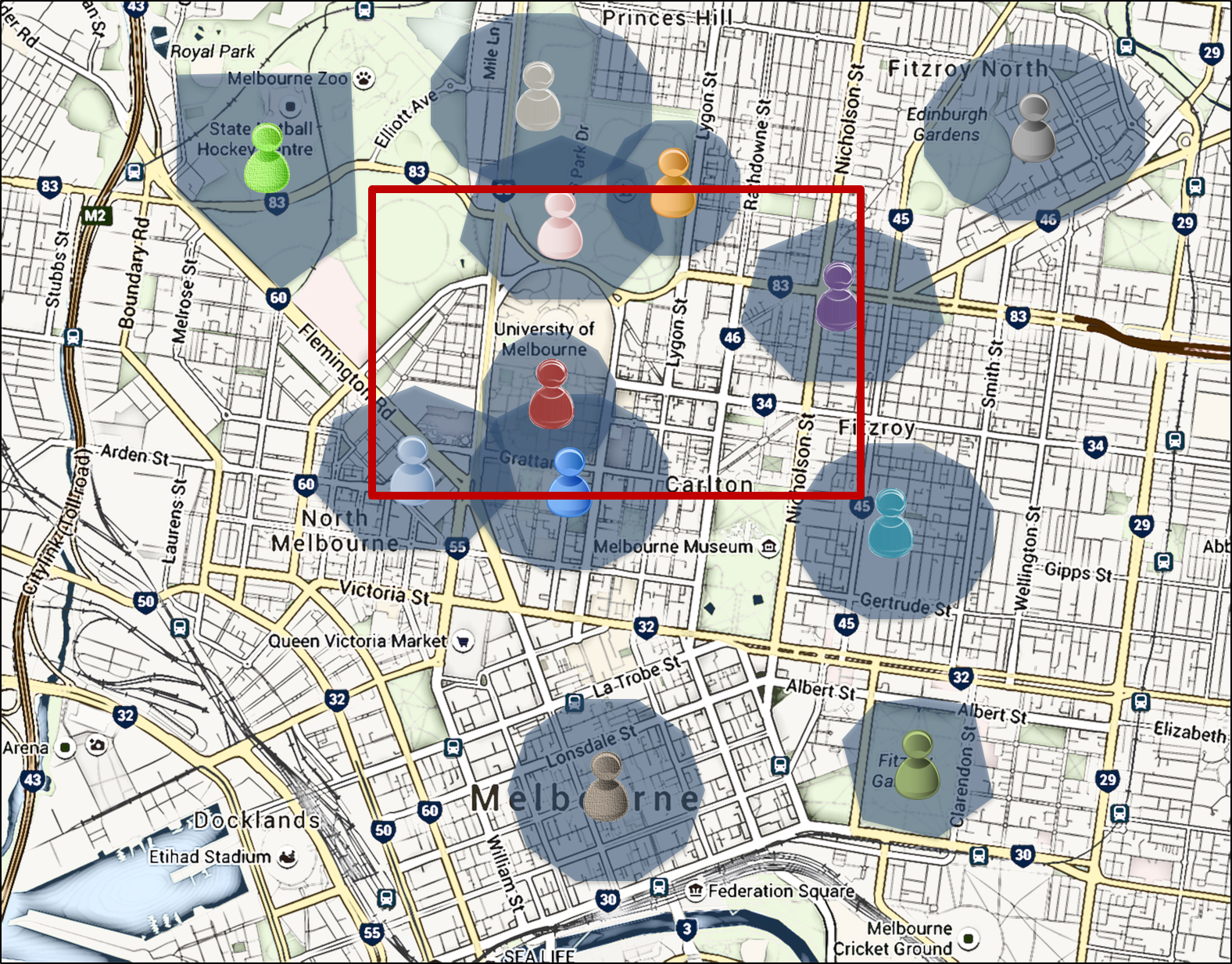}}
	\caption{Example users' point locations (path) or spatial regions on a map of Melbourne. Bolded rectangle depicts an example range query to count the number of users. }
	\label{fig:map_bodies_points}
\end{figure*}

The recently emerged strong guarantee of differential privacy~\cite{Cynthia2006,Dwork08} has attracted a number of researchers in location privacy. Typically work studies aggregation of point and trajectory data~\cite{InanKGB10,ChenFDS12,Cormode12PSD,QardajiYL13,HeCMPS15}, often via histogram-like data structures---regular or hierarchical---for controlling the level of perturbation required for privacy. 

Our goal in this paper\footnote{This paper extends our ICDM'2016 conference paper~\cite{Bodies2016}.} is to address the accurate counting of planar bodies, while providing the strong guarantee of differential privacy. 
While Euler histograms provide an excellent starting point in terms of utility, computational efficiency and aggregation-based qualitative privacy, a service provider may be directed by users to provide strong \emph{semantic} privacy. Differential privacy guarantees that an attacker with significant prior knowledge and computational resources cannot determine presence or absence of a user in a set of planar bodies. 

Differential privacy requires randomization. The challenge in combining the ideas of Euler histograms and differential privacy is that the data structure's large number of counts require randomised perturbation. As a result, the total noise added could be prohibitively high. Compared to point data in which at most one cell is impacted per record, here an object could span more than one cell, impacting many counts. Naive solutions would therefore significantly degrade utility. Moreover when sampled independently, perturbations can destroy the \term{consistency} of query responses over the resulting structure~\cite{Barak2007}.

The first stage of our approach is to perturb counts of a Euler histogram by applying noise controlled via sensitivity to a natural bound on planar body size. 
Then, to re-instate consistency and improve utility with no cost to privacy, we apply constrained inference that seeks to minimally update counts to satisfy consistency constraints. These constraints reflect relationships between data structure counts that must exist, but may be violated by perturbation. Under these constraints we apply least absolute deviations (LAD), which is more robust to outliers than ordinal regression---used previously for constrained inference in differential privacy. By enforcing consistency, we also ``average out'' previously-added noise, thereby improving utility in certain cases.
Finally, we round counts so that query responses are integral. This final stage, combined with consistency, yields responses that preserve a covertness property such that third-party observers cannot determine that privacy-preserving perturbation has taken place. 

Two privacy models have been studied for releasing datasets or their statistics: the interactive and non-interactive
models~\cite{Cynthia2006}. In the non-interactive setting, the database is sanitized and then released while the
interactive model considers mechanisms that respond to queries by releasing approximate query
responses. The main limitation with this latter approach is the limited number of queries permitted throughout the mechanism's lifetime. Interactive mechanisms (\eg Euler histograms~\cite{CASE2015}) can provide inconsistent results also. Our focus is on the non-interactive privacy setting, wherein our mechanisms release privacy-preserving data structures to third parties, with no limitation on the number of subsequent query responses permitted.
\newpage 
\paragraph{Contributions.}
We deliver several main contributions:
\begin{itemize}
	\item For the first time, we address the differentially-private counting of planar bodies (spatial region objects) in the non-interactive setting;
	\item We propose differentially-private mechanisms that leverage the Euler characteristic (via the Euler histogram data structure) to address the duplicate counting problem;
	\item We formulate novel constrained inference to reduce noise and introduce consistency based on the robust method of least absolute deviations; combined with rounding, this guarantees a covertness property;
	\item We contribute an end-to-end theoretical analysis of both high-probability utility and differential privacy; and
	\item We conduct a comprehensive experimental study on real-world datasets, which confirms the suitability of our approach to private range queries on spatial bodies.
\end{itemize}

\section{Related Work}\label{sec:related_work}

A series of effective privacy attacks on location data~\cite{Gruteser2004,Krumm07,Ghinita2013} has launched a significant amount of activity around privacy-preserving techniques for spatial analytics~\cite{Chow2011,Ghinita2013,Krumm09}.

Aggregation under range queries has emerged as a fundamental primitive in spatial and spatio-temporal analytics~\cite{Papadias2001,Tao2004,Lopez2005,Braz2007,LeonardiORRSAA14}. Originally motivated by statistical and computational efficiency, aggregation is now also used for qualitative privacy. 

A key challenge in aggregation is the \term{distinct counting}~\cite{Papadias2001,Tao2004,Lopez2005,Braz2007,LeonardiORRSAA14} or \term{multiple-counting} problem~\cite{SunAA02,Sun2006}. In contrast to point objects, a spatial body can span more than one cell in a partitioned space, inhibiting the ability of regular histograms to form accurate counts. \term{Euler histograms}~\cite{Beigel1998} are designed to address this problem for convex bodies~\cite{SunAA02,Sun2006}, by appealing to Euler's formula from graph theory~\cite{trudeau1993introduction}. A variation of Euler histogram has been studied for trajectory data
to address aggregate queries on moving objects~\cite{Xie2007}. In that work, Euler histograms were used in a distributed setting (motivating a distributed
Euler histogram), to tackle the duplicate (distinct) entry problem rather than duplicate (distinct) counting. The Euler-histogram tree~\cite{Xie2014} has been studied as a tree-based data structure for counting vehicle trajectories using the approach first developed in~\cite{CASE2015} to address the distinct counting problem for reducing storage requirements. 

There is a line of work~\cite{CASE2015}, in which the CASE histogram has been proposed as a privacy-preserving approach for trajectory data analytics, where only count data is utilised in a partitioned space applying the Euler characteristic to address duplicate counting. The authors in~\cite{CASE2015} discuss the interactive setting for differentially-private Euler histogram release, which has a prohibitive limitation of the number of queries being linear in the number of bodies. Our work has no such limitation~(see~\cite{Dwork08}).

Differential privacy~\cite{Cynthia2006,Dwork08} has now become a preferred approach to data sanitisation as it provides a strong semantic guarantee with minimal assumptions placed on the adversary's knowledge or capabilities.
Differential privacy has been studied for location privacy~\cite{Ghinita2013}. One existing approach is to obfuscate the user's location by perturbing their real geographic coordinates. The concept of geo-indistinguishability has been defined~\cite{AndresBCP13,PrimaultMLB14} as a notion of differential privacy in location-based services.
Due to its popularity, differential privacy has been applied to many algorithms and across many domains, such as specialized versions of spatial data indexing
structures designed with differential privacy for the purpose of private record matching~\cite{InanKGB10}; in spatial crowdsourcing to help volunteer workers' locations remain private~\cite{ToGS14}; in machine learning, releasing differentially-private learned models of SVM classifiers~\cite{Rubinstein12}; in geo-social networks for location recommendation~\cite{ZhangGC14};
and for modelling human mobility from real-world cellular network data~\cite{Mir13}.

Within the scope of aggregation, studies in the area of point privacy have also proposed sanitization algorithms for generating differentially-private histograms and releasing aggregate statistics. Many studies have explored differential privacy of point sets~\cite{AcsCC12,InanKGB10,Cormode12PSD,ChenFDS12,WangZM13,FanXS13,HeCMPS15,QardajiYL13,LiHay14}. They have studied regular grid partitioning data structures and hierarchical structures. This work for the first time addresses the problem of differentially-private counting of planar bodies.

Table~\ref{table:comparison}, demonstrates various techniques for privacy preserving spatial analytics using aggregates comparing privacy model, data type and approach.

\begin{table}
	\centering
	\caption{Taxonomy on private spatial data analytics using aggregates with examples of related work.}\label{table:comparison}
	\bigskip
	\scalebox{.95}{
	\begin{tabular}{|p{2.3cm}|p{1.7cm}|p{7cm}|}
%		\toprule
		\hline
		\textbf{Privacy Model} & \textbf{Data Type} & \textbf{Approach}\\
%		\midrule
%		\midrule
		\hline
		\hline
		Spatial Aggregation & Trajectory & Probabilistic counting using sketches---approximation method (Tao \etal~\cite{Tao2004})
		
		Distributed Euler Histograms (DEHs) addressing \dep (Xie \etal~\cite{Xie2007}), Count-based approach similar to~\cite{Xie2007} (Leonardi \etal~\cite{LeonardiORRSAA14})
		
		CASE histograms, addressing \dop(duplicate counting) \cite{CASE2015}\\		
%		\midrule
		\hline
		Differential Privacy & Point & Quad-tree, KD-tree (Cormode \etal~\cite{Cormode12PSD}), Uniform and Adaptive Grid (Qardaji \etal~\cite{QardajiYL13})\\
%		\midrule
		\hline
		Differential Privacy & Trajectory & Prefix tree (Chen \etal~\cite{ChenFDS12}), DPT, using hierarchical reference systems (He \etal~\cite{HeCMPS15}), CASE Histograms~\cite{CASE2015}\\
%		\midrule
		\hline
		Differential Privacy & Spatial Region (Planar Body) & Differentially private Euler histograms (this work)\\
%		\bottomrule
		\hline
	\end{tabular}}
\end{table}

\section{Preliminaries}\label{sec:definition}

One natural but qualitative approach to privacy preservation is spatial aggregation. We will leverage a data structure that permits spatial aggregation for body counts. 
\subsection{Euler Histograms}
\label{sec:prelim-euler}

Given a grid partitioned space, an Euler histogram data structure allocates buckets not only for grid cells, but also for grid cell edges and vertices. We formally define the data structure as below.

\begin{definition}\label{def:euler}
	Consider an arbitrary partition of a subset of $\mathbb{R}^2$ into convex cells. Define \Fset, \Eset, \Vset to be index sets over the partition's faces, edges (face intersections), and vertices (edge intersections). Let $\mathbf{P}$ be a vector with components, the faces, edges and vertices, indexed by $\Fset\cup\Eset\cup\Vset$ (\ie each $P_i\subset\mathbb{R}^2$ represents a face/edge/vertex area of the Euclidean plane); and let vector $\mathbf{H}$ of non-negative integers be indexed by $\Fset\cup\Eset\cup\Vset$ as well (representing counts per face/edge/vertex). Then we call the data structure $(\mathbf{P},\mathbf{H},\Fset,\Eset,\Vset)$ an \term{Euler histogram}.
\end{definition}

Originally, Euler histograms were designed as a grid partitioning data structure, but they are valid for other convex partitions as well. For example, valid Euler histograms could be defined over a Voronoi partition of space induced by a finite set of sensors as the sites of a Voronoi diagram detecting any object in their region~\cite{Xie2007}; or a rectangular partition over an urban area~\cite{CASE2015} such as in Figure~\ref{fig:euler_histograms}.

\begin{figure}
	\centering
	\includegraphics[width =  \columnwidth]{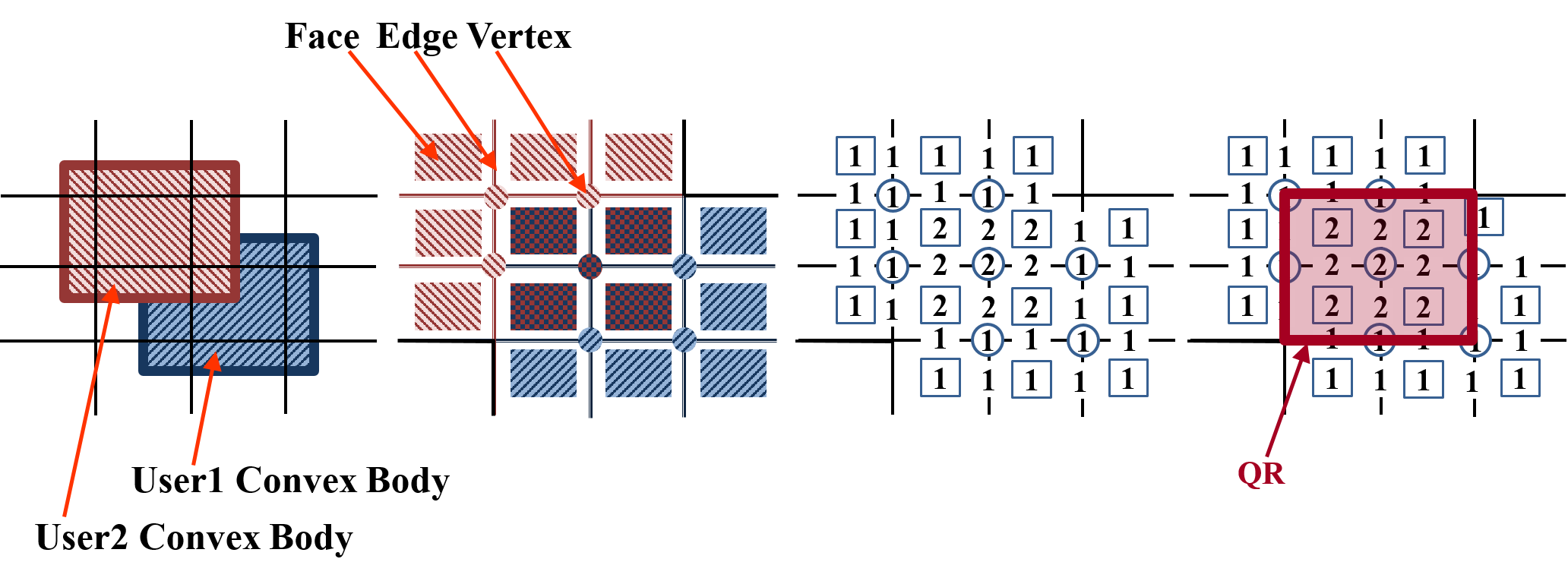}
	\caption{Two convex bodies overlapping a spatial partition and their related counts to corresponding Euler histogram; an example query region (QR) to count the number of objects.} 
	\label{fig:euler_histograms}
\end{figure}

Beigel and Tanin~\cite{Beigel1998} first introduced to spatial databases, the observation that the Euler characteristic~\cite{trudeau1993introduction} (including its extensions to higher dimensions) directly applies to this data structure. Euler's characteristic states that the number of convex bodies $N$ overlapping certain query regions can be computed exactly as
\begin{equation}\label{eq:euler}
N = F - E + V\enspace,
\end{equation}
where $F,E,V$ are the sum of face, edge, and vertex counts in $\mathbf{H}$ within the given query region (\term{QR} in Figure~\ref{fig:euler_histograms}). Duplicate counting due to summing face counts is corrected by subtracting edge counts. This in turn can over-compensate, and is corrected by adding vertex counts. This is a special case of the Inclusion-Exclusion Principle of set theory and applied probability. Figure~\ref{fig:euler_histograms} illustrates the impact two planar bodies have on a square-partition Euler histogram. Compared to conventional histograms, with the use of extra counts for grid cell edges and vertices, large objects spanning more than one cell are now distinguishable from several small objects intersecting only one cell. Applying Equation~\eqref{eq:euler} to calculate the number of objects inside the highlighted \term{QR} of Figure~\ref{fig:euler_histograms}, we arrive at the correct answer of $N = 8 - 8 + 2 = 2$.

\subsection{Differential Privacy}

We consider statistical databases on records---each representing a user's spatial region. Randomisation is vital for preventing an adversary from inverting a released statistic to reconstruct the original (private) data.

\begin{definition} 
	A \term{randomised mechanism} $\M(D)$ on database $D$, is a random variable taking values in response set $Range(\M)$.
\end{definition}

\begin{definition}We say that two databases $ D $, $ D' $ are \term{neighbours} if they are of equal size and differ on exactly one record---one spatial body representing a user in the context of this paper.
\end{definition}

\begin{definition}\label{def:dp}
	A randomised mechanism $\M$, is said to preserve \term{$\epsilon$-differential privacy} for $\epsilon>0$, if for all neighbouring databases $ D $, $ D' $, which differ in exactly one record, and measurable $C \subseteq Range(\M)$:
	\begin{eqnarray*}
		\Prob{\M(D)\in C} &\leq& \exp(\epsilon) \cdot \Prob{M(D')\in C} \enspace.
	\end{eqnarray*}
\end{definition}

Definition~\ref{def:dp} implies that an algorithm is differentially private if a change, addition or deletion of a record, does not significantly affect the output distribution. 
Differential privacy has become a \emph{de facto} standard for privacy of input data to statistical databases due to it being a semantic guarantee~\cite{Cynthia2006}.

\begin{remark}
	The threat model of differential privacy involves an incredibly powerful adversary with full knowledge of mechanism $\M$, all but one record of true latent database $D$, the ability to sample from $\M(D)$, and unlimited computational power. Using these capabilities, an optimal attack for reconstructing $D$ is to sample $m_1,\ldots,m_k\stackrel{iid}{\sim} \M(D)$. From this sample the attacker can form a histogram that is an empirical estimate $\hat \M(D)$ of the true response distribution $\M(D)$. Knowing that the true $D$ is neighbouring to the database $D'$ known by the attacker, they may simulate (using their unbounded computational resources) each and every response distribution $\M(D'')$ for neighbouring $D'',D'$ and then attempt to match these against $\hat M(D)$. Differential privacy states exactly that each of the simulated candidate response distributions are exceedingly similar (multiplicatively pointwise close), and so for sufficiently small $\epsilon$ (relative to sample size $k$ which is limited to linear in size of $D$) it is impossible for the attacker to distinguish the true $\M(D)$ by comparison with $\hat \M(D)$.
\end{remark}

\begin{lemma}[Post-Processing Immunity \cite{Cynthia2006}]\label{lem:post-immunity}
	For any randomised algorithm $\M : \mathcal{X} \rightarrow \mathcal{R}$ and any (possibly randomised) function $f: \mathcal{R} \rightarrow \mathcal{R'}$, if $\M$ is $\epsilon$-differentially private then $f\circ\M$ is also $\epsilon$-differentially private.
\end{lemma}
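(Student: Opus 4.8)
The plan is to reduce the randomised case to the deterministic one, and to dispatch the deterministic case by a preimage argument. Fix an arbitrary pair of neighbouring databases $D,D'$ and an arbitrary measurable set $C\subseteq\mathcal{R'}$; it suffices to show $\Prob{f\circ\M(D)\in C}\le\exp(\epsilon)\cdot\Prob{f\circ\M(D')\in C}$, since $D,D'$ and $C$ were arbitrary and this is exactly the condition of Definition~\ref{def:dp} applied to $f\circ\M$. First I would treat deterministic $f$. Here the event $\{f\circ\M(D)\in C\}$ is identical to the event $\{\M(D)\in f^{-1}(C)\}$, where $f^{-1}(C)=\{r\in\mathcal{R}:f(r)\in C\}$; measurability of $f$ guarantees $f^{-1}(C)$ is a measurable subset of $\mathcal{R}$, hence a legal test set for the differential-privacy guarantee of $\M$. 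Applying that guarantee to $f^{-1}(C)$ gives $\Prob{\M(D)\in f^{-1}(C)}\le\exp(\epsilon)\cdot\Prob{\M(D')\in f^{-1}(C)}$, which rewrites to the desired inequality.

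Next I would handle randomised $f$ by conditioning on its internal coin tosses. Model $f$ as a deterministic measurable map $g(r,\omega)$ of its argument $r$ together with an independent randomness source $\omega\sim\mu$, where $\mu$ does not depend on the database and $\omega$ is independent of $\M(D)$. By this independence and Fubini--Tonelli, $\Prob{f\circ\M(D)\in C}=\int\Prob{g(\M(D),\omega)\in C}\,d\mu(\omega)$. For each fixed $\omega$ the map $g(\cdot,\omega)$ is deterministic and measurable, so the deterministic case bounds the integrand pointwise by $\exp(\epsilon)\cdot\Prob{g(\M(D'),\omega)\in C}$. Because the multiplicative factor $\exp(\epsilon)$ is the same constant for every $\omega$, it passes through the integral, and integrating against $\mu$ recovers $\Prob{f\circ\M(D)\in C}\le\exp(\epsilon)\cdot\Prob{f\circ\M(D')\in C}$. (An alternative packaging: note $f\circ\M$ is itself a randomised mechanism whose output, conditioned on $\M$'s output, has a law depending on the data only through $\M(D)$, then invoke the deterministic case on the joint mechanism $D\mapsto(\M(D),\text{coins of }f)$ followed by the deterministic post-map.)

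I do not expect a real obstacle here; the content is entirely the preimage identity plus an averaging step. The only points requiring care are measure-theoretic bookkeeping: that $f^{-1}(C)$ is measurable (this is precisely what it means for $f$ to be a measurable function), that a randomised post-processor admits the representation $g(r,\omega)$ with $\omega$ independent of $\M(D)$, and that the joint measurability needed for Fubini holds. For discrete response spaces these are immediate, and for standard Borel spaces they hold by standard constructions, so none of them impede the proof. I would close by observing that the inequality was established for every neighbouring $D,D'$ and every measurable $C$, which is the definition of $\epsilon$-differential privacy for $f\circ\M$.
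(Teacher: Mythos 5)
Your proof is correct and is the standard argument: the paper itself gives no proof of this lemma, citing it directly from the differential-privacy literature~\cite{Cynthia2006}, and your preimage argument for deterministic $f$ followed by averaging over the post-processor's independent coins is exactly the canonical derivation found there. The measure-theoretic caveats you flag (measurability of $f^{-1}(C)$, the representation of a randomised $f$ as a deterministic kernel of its input and independent randomness) are the right ones and are harmless in the discrete and standard Borel settings relevant to this paper.
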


Lemma~\ref{lem:post-immunity} implies that differential privacy is immune to post-processing. This is also referred as \term{Transformation Invariance}, as one of the privacy axioms~\cite{Kifer2010}, indicating that post-processing privatised data maintains privacy.

\section{Problem Statement}\label{sec:problem}

The focus of this paper is to respond to range queries over spatial datasets consisting of a spatial region per user.

\begin{problem}\label{prob:main}
	Given a set of planar bodies, our goal is to batch process them to produce a data structure that can respond to an unlimited number of range queries within some fixed, bounded area: given a query region \term{QR}, we are to respond with an approximate count of bodies overlapping that region.
\end{problem}

For example, a range query covering the entire area in  Figure~\ref{fig:map_bodies} might elicit a response of (exact count of) 12.

\subsection{Evaluation Metrics}

We consider four properties of mechanisms, as competing metrics for evaluating solutions to Problem~\ref{prob:main}.

\begin{enumerate}[P1.]
	\item \textbf{Utility}: We measure utility by the absolute error of query responses relative to the true count of bodies intersecting a given query region.
	\item \textbf{Privacy}: Mechanisms should achieve non-interactive differential privacy, at some level $\epsilon$, in their release of a data structure on sensitive spatial data. 
	\item \textbf{Consistency}: \label{def:consistency}
	If responses to all possible queries agree with some fixed set of bodies then we say that the mechanism is \term{consistent}. Such a set of bodies need not coincide with the original input bodies.
	\item \textbf{Covertness}: \label{def:covert}
	If a consistent counting mechanism's query responses are non-negative integer-valued, then we also call it \term{covert}.
\end{enumerate}

Utility and privacy are in direct tension, for establishing privacy typically involves reducing the influence of data on responses. However for fixed levels of privacy, for example, we can ask what levels of utility are possible for available solutions to Problem~\ref{prob:main}.

If privacy-preserving perturbations are made independently across a data structure, it is unsurprising that overlapping queries will not necessarily result in consistent responses. This may be undesirable for some applications that utilise multiple, overlapping queries \eg urban planning. 
We consider specific, public consistency constraints which relate to the data structure adopted. As such, the \emph{level} of consistency can be benchmarked according to the number of consistency violations suffered. Unlike privacy, consistency is not necessarily at odds with utility: indeed we will demonstrate how imposing consistency can actually improve utility. Intuitively, if privacy-preservation involves injecting independent, random perturbations to a data structure, then consistency corresponds to a public smoothness assumption that can be used to `cancel out' the deleterious effect of perturbation. 
Consistency may also be applied when a measure of `stealth' is desired for a counting mechanism. 

\subsection{Assumptions}\label{sec:assumptions}

The theoretical guarantees developed in this paper leverage four assumptions (\cf Figure~\ref{fig:grid}). Each is relatively weak, being well motivated and satisfied in most practical settings.

\begin{figure}
	\centering
	\includegraphics[width = 0.4\columnwidth]{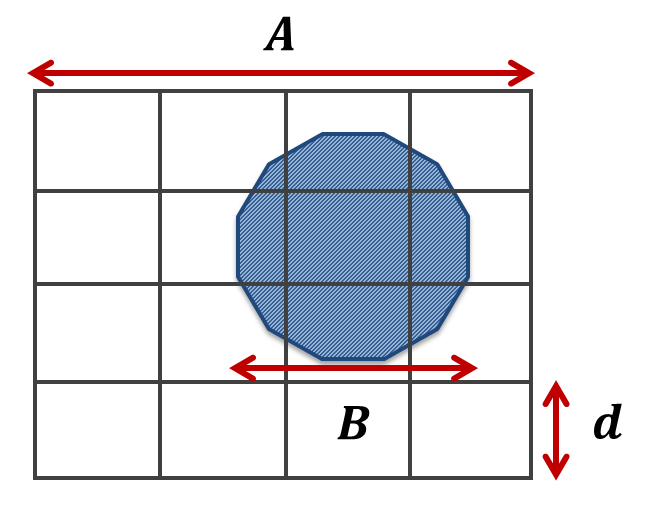}
	\caption{A convex body with bounded diameter, on a spatial partition.}
	\label{fig:grid}
\end{figure}

\begin{enumerate}[{A}1.]
	\item \label{assume:convex_cell}
	We assume that the space partition's cells are all convex.
	
	\item \label{assume:convex_qr}
	We assume that query regions are convex unions of our space partition's cells.
	
	\item \label{assume:convex_body}
	We assume that all planar bodies are convex.
	
	\item \label{assume:bound}
	We assume that all planar bodies are of some bounded $L_2$ diameter $B>0$.
	
\end{enumerate}

A sufficient condition for correctness of Equation~\eqref{eq:euler}, is that all objects are convex planar bodies. However, convexity is not necessary. In general, objects being disconnected leads to inaccurate counts. Note that connected objects can become disconnected \eg a concave object not contained by a query region~\cite{CASE2015}. Our first three assumptions are sufficient for guaranteeing correctness (perfect utility) for Euler histograms. Relaxing these assumptions may come at the cost of utility. For example convex query regions that are not unions of cells can exactly count the number of bodies in the (enlarged) union of cells intersecting the QR. And general query regions will still result in excellent utility. Two important partition geometries satisfy these conditions: rectangular and Voronoi partitions.

The fourth assumption controls the $L_1$-Lipschitz smoothness of Euler histogram counts with respect to input bodies. This parameter---also known as the \term{global sensitivity} (\cf Definition~\ref{def:gs})---calibrates the scale of noise added for differential privacy. We consider a motivating example to be regions of frequent visitation. These are necessarily bounded. With $B$ sufficiently large, no restriction is made on valid bodies.

Without loss of generality we assume partitions are square of side length $A>0$, divided into $n$ rows and $n$ columns, yielding square cells of side length $d=A/n$ (\cf Figure~\ref{fig:grid}).

\section{Algorithms and Analysis}\label{sec:approach}

Our approach consists of four complementary algorithms:

\begin{compactitem}
	\item \Eu \term{(Eu)}: Euler histogram construction from a set of convex planar bodies;
	
	\item \DP \term{(DP)}: Calibrated perturbation of histogram counts to achieve $\epsilon$-differential privacy. To improve utility, negative counts are truncated at zero;
	
	\item \LP \term{(LP)}: Constrained inference for consistency;
	
	\item \R \term{(R)}: Rounding counts for covertness. 
\end{compactitem}

We detail each mechanism, followed by its theoretical analysis. Figure~\ref{fig:algorithm_steps} depicts an example run of each algorithm in turn. As shown, Figure~\ref{fig:bodies} illustrates two users' spatial regions, our running example in Figure~\ref{fig:euler_histograms}, as input raw data for the first Algorithm~\ref{algo:euler} (\cf Section~\ref{sec:euler}). 
\begin{figure*}
	\centering
	\subfloat[ Two users' planar bodies.\label{fig:bodies}]{
		\includegraphics[scale = .29]{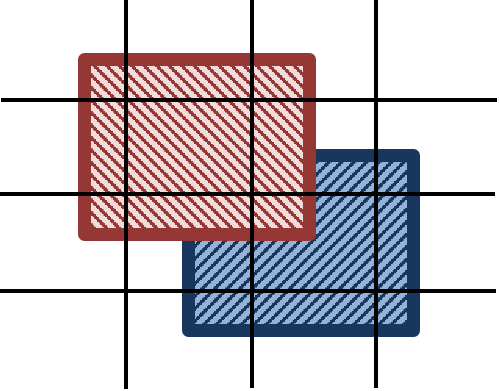}}
	\hfill
	\subfloat[ Euler histogram construction.\label{fig:Euler}]{
		\includegraphics[scale = .29]{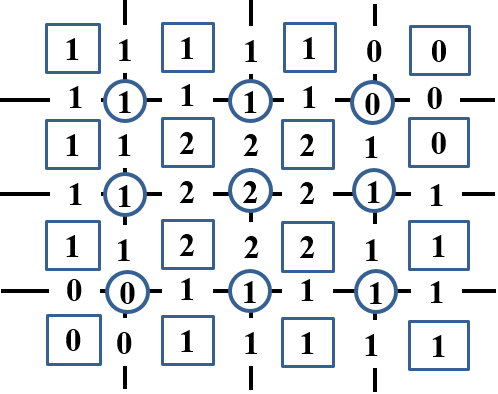}}
	\hfill
	\subfloat[ Perturbing the counts (DP).\label{fig:DP}]{
		\includegraphics[scale = .29]{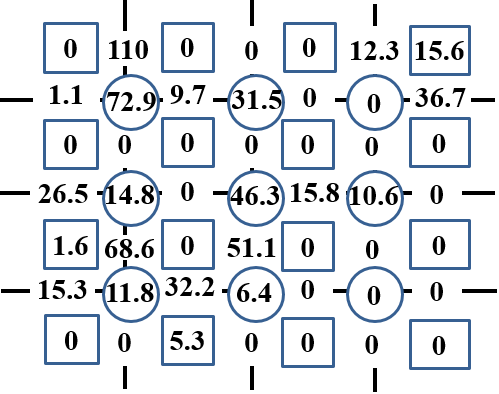}}

	\subfloat[ Linear programming (LP).\label{fig:LP}]{
		\includegraphics[scale = .29]{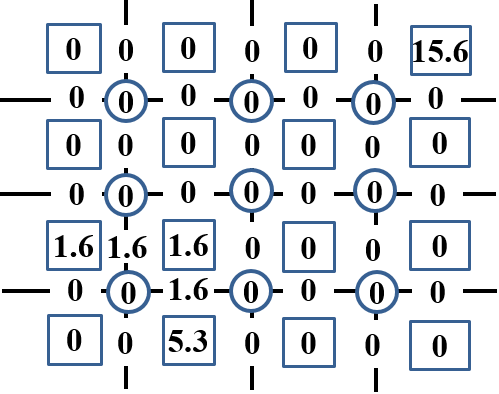}}
	\hfill
	\subfloat[ Rounding the counts (R).\label{fig:round}]{
		\includegraphics[scale = .29]{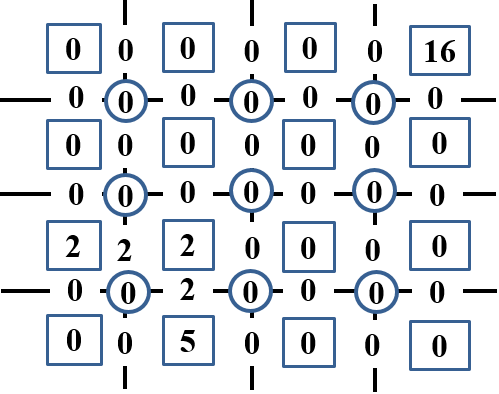}}
	\hfill
	\subfloat[ Example range query response.\label{fig:QR}]{
		\includegraphics[scale = .29]{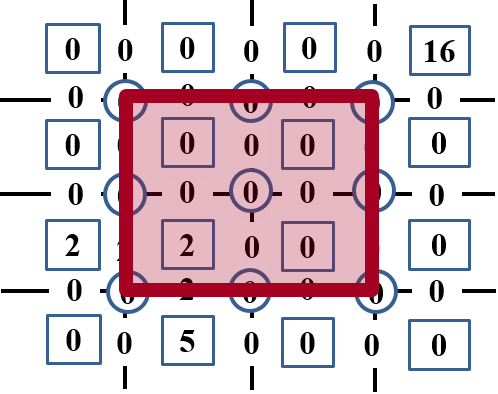}}
	\caption{An example of the mechanisms' outputs; numbers from a real run. }
	\label{fig:algorithm_steps}
\end{figure*}

\subsection{Algorithm: Euler}\label{sec:euler}

Algorithm~\ref{algo:euler} creates a data structure (Euler histograms \cf Section~\ref{sec:prelim-euler}) to represent aggregated counts of a given set of convex planar bodies $\mathcal{X}$. The algorithm simply increments counts for any face, edge, vertex that intersects a body. As shown in Figure~\ref{fig:Euler}, processing a convex body determines what counts need to be incremented.

\IncMargin{1em}

\begin{algorithm}
	\LinesNumbered
	\SetKwInOut{Input}{Input}\SetKwInOut{Output}{Output}
	
	\Input{Set of planar bodies $\mathcal{X}$; partition $(\mathbf{P},\Fset,\Eset,\Vset)$}
	\Output{Euler histogram $(\mathbf{H},\mathbf{P},\Fset,\Eset,\Vset)$}
	\For{$i \in \Fset \cup \Eset  \cup \Vset$}{
		$\h{i} \longleftarrow 0$
	}
	\For{$x \in \mathcal{X} $}{
		\For{$i \in \Fset \cup \Eset  \cup \Vset$}{
			\If{  $x \cap P_i \neq \emptyset$}{
				$\h{i} \longleftarrow \h{i} + 1$
			}
		}
	}

	\caption{Euler (Eu): Euler Histogram Construction}\label{algo:euler}
	
\end{algorithm}
\DecMargin{1em}

\paragraph{Privacy.}
\Eu is qualitatively private via aggregation, 
but it does not achieve any differential privacy by virtue of being deterministic.

\paragraph{Utility.}
Assumptions A\ref{assume:convex_cell}--A\ref{assume:convex_body} guarantee the preconditions of the following, direct results of Equation~\eqref{eq:euler}. 

\begin{corollary}
	If input bodies, partition cells, and query region are convex, and the query region is a union of cells, then \Eu's responses to the range query via Equation~\eqref{eq:euler} are accurate.
\end{corollary}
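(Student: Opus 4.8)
The plan is to separate the claim into a routine verification that Algorithm~\ref{algo:euler} computes exactly the per-cell counts expected by Equation~\eqref{eq:euler}, and the substantive ingredient---Equation~\eqref{eq:euler} itself---which under Assumptions~A\ref{assume:convex_cell}--A\ref{assume:convex_body} and the hypothesis that the query region is a union of cells guarantees that those counts reproduce the true number $N$ of overlapping bodies.

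First I would read off from the loop structure of Algorithm~\ref{algo:euler} that, on input $\mathcal{X}$, the returned histogram satisfies $H_i=\#\{x\in\mathcal{X}:x\cap P_i\neq\emptyset\}$ for every index $i\in\Fset\cup\Eset\cup\Vset$: each body is examined once and increments precisely the buckets whose face/edge/vertex region it meets. Hence, when \term{QR} is a union of cells, the quantities $F$, $E$, $V$ formed by summing the appropriate components of $\mathbf{H}$ over the faces, edges and vertices lying inside \term{QR} are exactly $F=\sum_{x\in\mathcal{X}}f(x)$, $E=\sum_{x\in\mathcal{X}}e(x)$, $V=\sum_{x\in\mathcal{X}}v(x)$, where $f(x),e(x),v(x)$ count the faces, edges and vertices of the partition inside \term{QR} that body $x$ intersects. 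The range-query response is therefore $F-E+V=\sum_{x\in\mathcal{X}}\bigl(f(x)-e(x)+v(x)\bigr)$, so it suffices to show that $f(x)-e(x)+v(x)=1$ whenever $x$ overlaps \term{QR}, and $=0$ otherwise (the latter being immediate).

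Second, the per-body identity $f(x)-e(x)+v(x)=1$ is precisely the Euler-characteristic property of Euler histograms for convex bodies, \ie Equation~\eqref{eq:euler} specialised to a single body, established by Beigel and Tanin~\cite{Beigel1998} and Sun~\etal~\cite{SunAA02,Sun2006}; I would invoke it, noting that A\ref{assume:convex_cell}--A\ref{assume:convex_body} supply exactly its hypotheses (convex cells, convex bodies) and that A\ref{assume:convex_qr} (\term{QR} a convex union of cells) lets us replace $x$ by the convex set $y:=x\cap\text{\term{QR}}$ and argue within the cell complex that the partition induces on \term{QR}. For completeness I would recall the underlying argument: it is an inclusion--exclusion over the cells met by $y$---summing face counts over-counts, correcting by edge counts over-corrects, correcting by vertex counts is exact---which for a rectangular partition reduces to a short sweep-line telescoping (convexity forces $x$ to meet a contiguous block of cells in each row), and for general convex partitions follows from the contractibility of $y$ via the nerve lemma. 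Summing over $x\in\mathcal{X}$ then yields $F-E+V=N$, the number of bodies overlapping \term{QR}, which is the corollary.

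I expect the only real obstacle to lie in the careful bookkeeping behind Equation~\eqref{eq:euler}: degenerate incidences---a body meeting a cell, an edge, or the boundary of \term{QR} only in a lower-dimensional set (\eg tangent at a single boundary vertex), or vertices at which only two cells of the induced complex meet---can shift the alternating count $f(x)-e(x)+v(x)$ away from $1$ if ``$x\cap P_i\neq\emptyset$'' is read too literally. The resolution, and the reason A\ref{assume:convex_cell}--A\ref{assume:convex_body} are \emph{sufficient} but not \emph{necessary}, is that these assumptions (together with interpreting ``overlap'' as positive-area intersection) place us in the non-degenerate regime where the induced decomposition of $y$ is a genuine regular cell complex and the Euler-characteristic / inclusion--exclusion computation is valid; since the corollary only needs Equation~\eqref{eq:euler} as a black box, this delicate point is inherited from the cited references rather than re-proved here.
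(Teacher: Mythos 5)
Your proposal is correct and matches the paper's treatment: the paper offers no separate proof, simply declaring the corollary a ``direct result'' of Equation~\eqref{eq:euler} under Assumptions~A1--A3, with the Euler-characteristic counting property itself inherited from the cited references (Beigel and Tanin; Sun \emph{et al.}). Your write-up merely makes explicit the linearity step (swapping the sum over histogram buckets for a sum over bodies, reducing to the per-body identity $f(x)-e(x)+v(x)=1$) and, like the paper, defers the degenerate-incidence bookkeeping to those references.
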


\begin{corollary}
	\Eu is consistent (P\ref{def:consistency}) and covert (P\ref{def:covert}).
\end{corollary}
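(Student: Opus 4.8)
The statement to prove is that \Eu is both consistent (P\ref{def:consistency}) and covert (P\ref{def:covert}). The plan is to observe that these are immediate structural properties of the output of Algorithm~\ref{algo:euler}, requiring essentially no computation, but I would want to spell out why each of the two defining conditions holds. Recall that consistency (P\ref{def:consistency}) demands that the query responses — here, the values $N = F - E + V$ obtained via Equation~\eqref{eq:euler} over any query region — be simultaneously realizable by \emph{some} fixed collection of bodies; and covertness (P\ref{def:covert}) additionally demands that every such response be a non-negative integer.

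First I would establish consistency. The key point is that \Eu's output data structure \Hb was literally constructed from an actual set of convex planar bodies $\mathcal{X}$ (the true input), and by the preceding corollary, under Assumptions A\ref{assume:convex_cell}--A\ref{assume:convex_body}, for every query region that is a convex union of cells the response $N = F - E + V$ equals the exact count of bodies of $\mathcal{X}$ overlapping that region. Thus the witnessing set of bodies required by P\ref{def:consistency} is just $\mathcal{X}$ itself: all query responses agree with this one fixed set, so the mechanism is consistent. (Note that P\ref{def:consistency} explicitly allows the witnessing set to be the original input bodies, so nothing subtle is needed.)

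Second I would establish covertness, which by definition (P\ref{def:covert}) only requires, on top of consistency already shown, that the query responses are non-negative integer-valued. Integrality is immediate: every count $\h{i}$ is incremented from $0$ by $+1$ steps in Algorithm~\ref{algo:euler}, so \Hb is an integer vector, hence $F$, $E$, $V$ are integers and $N = F - E + V \in \mathbb{Z}$. Non-negativity follows from the same exact-counting corollary: $N$ equals the number of bodies of $\mathcal{X}$ overlapping the query region, which is a cardinality and therefore $\geq 0$. Hence every response is a non-negative integer, and \Eu is covert.

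The only place where any care is needed — and the closest thing to an ``obstacle'' — is making sure the hypotheses of the exact-counting corollary are in force whenever we invoke it; this is exactly why Assumptions A\ref{assume:convex_cell}--A\ref{assume:convex_body} are assumed throughout, and why query regions are taken to be convex unions of cells (A\ref{assume:convex_qr}). With those assumptions, the argument is entirely a matter of unwinding definitions, and no concentration or perturbation analysis enters, since \Eu adds no noise.
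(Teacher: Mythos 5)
Your proof is correct and matches the paper's intent exactly: the paper gives no explicit proof, presenting both corollaries as ``direct results'' of Equation~\eqref{eq:euler} under Assumptions A\ref{assume:convex_cell}--A\ref{assume:convex_body}, and your unwinding (the input set $\mathcal{X}$ itself witnesses consistency via the exact-counting corollary, while integrality of the counts and non-negativity of the exact cardinality give covertness) is precisely the argument being left implicit.
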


\paragraph{Computational Complexity.} As our partition has $n$ rows and columns, \Eu's time and space complexities are efficient at $O( \abs{\mathcal{X}} n^2)$ and $O(n^2)$ respectively.

\subsection{Algorithm: DiffPriv}\label{sec:DP}
\Eu achieves a number of our target properties but not differential privacy. We now introduce differential privacy to our approach by perturbing Euler histogram counts. In Algorithm~\ref{algo:DP}, we add carefully-crafted random noise 
based on the sensitivity of the histogram to input bodies. 
We truncate any resulting negative counts to zero, improving utility at no cost to privacy (\cf Lemma~\ref{lem:post-immunity}). Figure~\ref{fig:DP} depicts the result of this phase for a real example. For interested readers, the computed global sensitivity (GS) (\cf Definition~\ref{def:gs} and Lemma~\ref{lem:gs}) of this example is $25$, where $\ceil{B/d} = 2$, and the allocated privacy budget, $\epsilon$, is $1$.

\paragraph{Privacy.} The key step to establishing the differential privacy of \DP, is to calculate Lipschitz smoothness for \Eu---the scale of noise to be added to reduce sensitivity. This represents how sensitive \Eu is to input bodies, and so how much noise should be added to reduce this sensitivity for privacy.

\begin{definition}\label{def:gs}
	Let $f$ be a deterministic, real-vector-valued function of a database. The \term{$L_1$-global sensitivity} (GS) of $ f $ is given by $\Delta f = \max\limits_{D,D'} \norm{f(D) -f(D')}_1$, taken over all neighbouring pairs of databases. 
\end{definition}

The $L_1$-global sensitivity is a property of function $f$, independent of input database. For Euler histograms, the GS measures the effect on the histogram count vector, due to changing an input planar body related to a user's spatial region. 

\begin{lemma}\label{lem:gs}
	The $L_1$-global sensitivity of \Eu is $4.5\Big(\ceil*{\frac{B}{d}}+1\Big) \ceil*{\frac{B}{d}}$, where $d>0$ is the cell side length, and $B>0$ is an $L_2$ bound on planar body diameter.
\end{lemma}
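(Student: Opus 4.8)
## Proof Proposal for Lemma~\ref{lem:gs}

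The plan is to bound the $L_1$-global sensitivity by a worst-case counting argument: I would show that adding or removing a single convex planar body of $L_2$-diameter at most $B$ can change at most a certain number of face, edge and vertex counts, each by exactly one. Since \Eu increments a count by $1$ whenever a body intersects the corresponding face/edge/vertex area, the quantity $\norm{\Eu(D)-\Eu(D')}_1$ equals the number of partition elements (faces, edges, vertices) that the symmetric-difference body intersects. So the entire problem reduces to a deterministic geometric question: \emph{what is the maximum number of faces, edges and vertices of the square grid that a single convex body of diameter $\le B$ can touch?}

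First I would bound the faces. A convex body of $L_2$-diameter $B$ fits inside some axis-aligned bounding box; in the worst case it is positioned to straddle grid lines, so it can meet at most $\lceil B/d\rceil + 1$ columns of cells and at most $\lceil B/d\rceil + 1$ rows, but because the body is convex its intersection with the grid is ``staircase-free'' in a way that prevents it from realizing the full $(\lceil B/d\rceil+1)^2$ product — I expect the correct face bound to be something like $\lceil B/d\rceil(\lceil B/d\rceil+1)$ or a small additive variant, and I would nail down the exact constant by a careful picture (\cf Figure~\ref{fig:grid}). Then I would count edges (shared cell boundaries) and vertices (grid corners) lying inside this same region; each interior grid line segment and each interior grid point within the span of the body contributes, and these are again controlled by $\lceil B/d\rceil$ in each direction. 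Summing faces $+$ edges $+$ vertices and collecting terms should yield the stated $4.5\big(\lceil B/d\rceil+1\big)\lceil B/d\rceil$. The factor $4.5 = 9/2$ strongly suggests the bookkeeping naturally produces a sum of the form $\tfrac{1}{2}k(k+1)\cdot(\text{something}) + \dots$ with $k=\lceil B/d\rceil$, i.e.\ triangular-number contributions from the convexity constraint cutting the count roughly in half along a diagonal; I would organize the count so this emerges cleanly rather than by brute enumeration.

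The main obstacle, I expect, is getting the geometric constant exactly right rather than merely up to a constant factor — in particular, justifying rigorously that a convex body cannot intersect more than the claimed number of cells/edges/vertices. The naive ``bounding box'' argument gives a quadratic in $\lceil B/d\rceil$ with the wrong (larger) leading constant; the sharpening to $4.5\,k(k+1)$ must use convexity in an essential way (a convex set crossing a vertical grid line does so in a single interval, which limits how the intersected cells in adjacent columns can ``shift''). I would make this precise via a monotonicity/interval argument: project the body onto the $x$-axis, note it meets at most $k+1$ columns, and in each column it meets an interval of at most $k+1$ consecutive rows, but the convexity forces the row-intervals across columns to overlap in a controlled staircase so the \emph{total} cell count is subquadratic in the crude bound — then handle the argument's boundary edges and vertices analogously. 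After the face/edge/vertex counts are each pinned down, combining them is routine algebra. Finally, I would confirm the bound is attained (or nearly so) by exhibiting an explicit body — e.g.\ a long thin convex region placed diagonally — so that the GS is tight and the calibrated noise in \DP is not wasteful; this also sanity-checks the constant $4.5$ against the worked numerical example ($\lceil B/d\rceil = 2$ giving $4.5\cdot 3\cdot 2 = 27$ versus the quoted $25$, a discrepancy I would need to reconcile, likely because the example uses the actual body rather than the worst case, or because the bound as stated is a clean over-estimate).
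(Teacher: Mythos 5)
Your reduction of the sensitivity to a single geometric counting question---how many faces, edges and vertices of the grid can one body of diameter at most $B$ touch---is the same as the paper's, but your plan for the counting step heads in the wrong direction. You assert that the naive bounding-box argument yields ``a quadratic in $\lceil B/d\rceil$ with the wrong (larger) leading constant'' and that the sharpening to $4.5\,k(k+1)$ ``must use convexity in an essential way'' via a staircase/triangular-number argument. Neither is true, and the convexity-based sharpening you gesture at is never actually carried out (``I would nail down the exact constant by a careful picture''). The paper's proof \emph{is} the naive count: with $n=\lceil B/d\rceil+1$ the body meets at most an $n\times n$ block of cells, and the faces, edges and vertices of that block number at most $n^2+2n(n-1)+(n-1)^2=4n(n-1)+1$. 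Writing $k=\lceil B/d\rceil$ this is at most $4(k+1)k+1$, and the trailing $+1$ is absorbed using $1\le\tfrac12 k(k+1)$ (valid since $k\ge 1$), giving $4.5(k+1)k$. So the factor $4.5=4+\tfrac12$ is bookkeeping for swallowing an additive constant, not a triangular-number phenomenon, and convexity plays no role beyond the diameter bound A4. This also resolves the numerical discrepancy you flag: for $k=2$ the pre-relaxation count is $4\cdot 3\cdot 2+1=25$, which is exactly the paper's quoted value, while $27$ is the relaxed form. Executing the ``crude'' bounding-box count you dismiss would have finished the proof; pursuing the convexity-driven subquadratic face bound you propose instead is both unnecessary for the stated result and the hardest, least specified part of your plan.

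A secondary point: under the paper's definition, neighbouring databases have equal size and differ by \emph{replacing} one body with another, so $\norm{\Eu(D)-\Eu(D')}_1$ counts components intersected by exactly one of the two differing bodies, which a priori can be up to twice the single-body count. Your phrasing ``adding or removing a single convex planar body'' silently switches to the add/remove neighbouring convention (the paper's own final sentence has the same looseness). An airtight write-up should either adopt that convention explicitly or address the replacement case.
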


\begin{proof}
	By Assumption~\ref{assume:bound} (\cf Figure~\ref{fig:grid}), the number of cells that could intersect with a body is at most $\ceil*{\frac{B}{d}} + 1$ in one direction. Therefore the total number of cells that could intersect a body is
	\begin{eqnarray*}
		n^2 &\le& \left(\ceil*{\frac{B}{d}}+1\right)^2\enspace.
	\end{eqnarray*}
	
	From this the number of faces, edges and vertices of partition $\mathbf{P}$ intersecting with a body can be upper-bounded as
	\begin{eqnarray*}
		\mbox{\#Faces} &=& n^2  \leq \left(\ceil*{\frac{B}{d}}+1\right)^2 \enspace; \\
		\mbox{\#Edges} &\leq& 2 n (n-1) \enspace;\enspace \mbox{and} \\
		\mbox{\#Vertices} &\leq& (n - 1)^2 \enspace.
	\end{eqnarray*}
	
	Summing these, we may bound the total number of partition components intersected by the body as
	\begin{eqnarray*}
		&&		 4n(n-1) + 1 \\
		& \leq& 4\left(\ceil*{\frac{B}{d}}+1\right) \ceil*{\frac{B}{d}} + 1 \\
		& =& 4\left(\ceil*{\frac{B}{d}}+1\right) \ceil*{\frac{B}{d}} + \frac{1}{2} \ceil*{\frac{B}{d}}^2 +  \frac{1}{2} \ceil*{\frac{B}{d}} \\
		&= &4\left(\ceil*{\frac{B}{d}}+1\right) \ceil*{\frac{B}{d}} + \frac{1}{2} \left(\ceil*{\frac{B}{d}}+1\right) \ceil*{\frac{B}{d}}\\
		& \leq&  4.5\left(\ceil*{\frac{B}{d}}+1\right) \ceil*{\frac{B}{d}}\enspace.
	\end{eqnarray*}
	Since changing a single body in a database can affect impacted histogram cell counts by one, this expression is also a bound on global sensitivity.
\end{proof}

\DP applies the \term{Laplace mechanism}~\cite{Cynthia2006} to \Eu: it adds to a non-private vector-valued function $f$, i.i.d. Laplace-distributed noise with centre zero and scale $\lambda$ given by $\Delta f /\epsilon$, for desired privacy level $\epsilon>0$. Here, $\lambda = \Delta \Hb/\epsilon$.

\IncMargin{1em}
\begin{algorithm}[t]
	\LinesNumbered
	\SetKwInOut{Input}{Input}\SetKwInOut{Output}{Output}
	
	\Input{Euler histogram: $(\mathbf{P},\mathbf{H}, \Fset, \Eset, \Vset)$; privacy  $\epsilon>0$; sensitivity $\Delta\mathbf{H}>0$}
	\Output{Noisy histogram: $(\mathbf{P},\mathbf{H}', \Fset, \Eset, \Vset)$}
	
	\For{$i \in \Fset \cup \Eset  \cup \Vset$}{
		$\h{i}' \longleftarrow \h{i} + Lap(0;\Delta\mathbf{H}/\epsilon)$
		
		\If { $ \h{i}' < 0 $ }
		{ $ \h{i}'  \longleftarrow 0 $}
	}
	
	\caption{DiffPriv (DP): Perturbation by Laplace Noise}\label{algo:DP}
	
\end{algorithm}
\DecMargin{1em}

\begin{corollary}
	\DP preserves $\epsilon$-differential privacy.
\end{corollary}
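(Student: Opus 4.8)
The plan is to establish the corollary as an immediate consequence of the two facts already assembled in the excerpt: the $L_1$-global sensitivity bound for \Eu from Lemma~\ref{lem:gs}, and the standard privacy guarantee of the Laplace mechanism. First I would recall that \DP constructs the noisy histogram $\mathbf{H}'$ by viewing \Eu as a deterministic vector-valued function $f$ of the database (namely, $f(D) = \mathbf{H}$, the Euler histogram of the bodies in $D$), and then releasing $f(D) + (Y_i)_{i\in\Fset\cup\Eset\cup\Vset}$ where the $Y_i$ are i.i.d.\ $Lap(0; \Delta\mathbf{H}/\epsilon)$ with $\Delta\mathbf{H} = 4.5\big(\ceil*{\frac{B}{d}}+1\big)\ceil*{\frac{B}{d}}$. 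The classical Laplace mechanism theorem states precisely that adding i.i.d.\ Laplace noise of scale $\Delta f/\epsilon$ coordinatewise to a function of $L_1$-global sensitivity $\Delta f$ yields $\epsilon$-differential privacy; since Lemma~\ref{lem:gs} certifies that $\Delta\mathbf{H}$ is a valid sensitivity bound for \Eu, the perturbation step alone satisfies Definition~\ref{def:dp}.

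The second ingredient is the truncation at zero: after sampling $\h{i}'$, the algorithm replaces any negative value by $0$. I would handle this by invoking Lemma~\ref{lem:post-immunity} (post-processing immunity): the map that clamps each coordinate of a real vector to its non-negative part is a fixed, deterministic function of the mechanism's output that does not touch the database, so composing it with the $\epsilon$-differentially private Laplace-perturbed histogram preserves $\epsilon$-differential privacy. Thus \DP, which is exactly ``Laplace mechanism on \Eu, then clamp,'' is $\epsilon$-differentially private.

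The only point requiring a word of care—and the closest thing to an obstacle—is the direction of the inequality between the noise scale used in the algorithm and the true sensitivity. Lemma~\ref{lem:gs} gives an \emph{upper bound} $\Delta\mathbf{H}$ on $\Delta f = \max_{D,D'}\norm{f(D)-f(D')}_1$, not the exact value; one must observe that using a noise scale $\Delta\mathbf{H}/\epsilon \geq \Delta f/\epsilon$ can only make the output distributions \emph{more} similar across neighbours, so the Laplace mechanism guarantee still goes through with parameter $\epsilon$ (equivalently, $f$ has global sensitivity at most $\Delta\mathbf{H}$, and $\epsilon$-DP for a sensitivity bound implies $\epsilon$-DP). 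With that remark in place the proof is a one-line chain: Lemma~\ref{lem:gs} $\Rightarrow$ Laplace mechanism gives $\epsilon$-DP for the perturbation $\Rightarrow$ Lemma~\ref{lem:post-immunity} preserves it under clamping $\Rightarrow$ \DP is $\epsilon$-differentially private.
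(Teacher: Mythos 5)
Your proof is correct and follows essentially the same route as the paper: the paper's one-line proof appeals to the standard Laplace-mechanism argument (triangle inequality on the ratio of Laplace densities together with the global sensitivity from Lemma~\ref{lem:gs}), and it handles the truncation at zero via Lemma~\ref{lem:post-immunity} in the prose preceding the algorithm, exactly as you do. Your additional remark that an \emph{upper bound} on the sensitivity suffices is a worthwhile point of care that the paper leaves implicit.
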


\begin{proof}
	The result follows by applying the triangle inequality to the odds ratio using the definition of Laplace density, and global sensitivity~\cite{Cynthia2006}. 
\end{proof}

\paragraph{Utility. } \DP is neither covert nor consistent, however we can bound its utility.

\begin{theorem}\label{theorem:DP}
	For confidence level $\delta \in (0,1)$, the counts $\Hb$ output by \Eu and counts $\Hn$ output by \DP are uniformly close with high probability
	\begin{eqnarray*}
		\Pr\left(\norm{\Hn - \Hb}_\infty \leq \lambda \log\left(\frac {\abs{\Fset} + \abs{\Eset} + \abs{\Vset}}{\delta}\right)\right) &\geq& 1 -\delta\enspace.
	\end{eqnarray*}
\end{theorem}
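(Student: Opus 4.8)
The plan is to reduce the $\ell_\infty$ bound on the noise vector to a standard union-bound argument over the independent Laplace coordinates. First I would observe that by construction of \DP (Algorithm~\ref{algo:DP}), before the truncation step each coordinate satisfies $\hn{i} - \h{i} = Z_i$ where the $Z_i$ are i.i.d.\ $\mathrm{Lap}(0;\lambda)$ with $\lambda = \Delta\Hb/\epsilon$, and $i$ ranges over the index set $\Fset\cup\Eset\cup\Vset$ of size $m := \abs{\Fset}+\abs{\Eset}+\abs{\Vset}$. The truncation $\hn{i}\leftarrow\max(0,\hn{i})$ only moves a coordinate closer to $\h{i}\ge 0$ (the true counts are non-negative integers), so $\abs{\hn{i}-\h{i}}\le\abs{Z_i}$ pointwise; hence it suffices to bound $\max_i\abs{Z_i}$.

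Next I would recall the Laplace tail bound: for $Z\sim\mathrm{Lap}(0;\lambda)$ and any $t>0$, $\Pr(\abs{Z}>t) = \exp(-t/\lambda)$. Setting $t = \lambda\log(m/\delta)$ gives $\Pr(\abs{Z_i}>t) = \delta/m$ for each fixed $i$. A union bound over the $m$ coordinates then yields $\Pr(\max_i\abs{Z_i} > \lambda\log(m/\delta)) \le m\cdot(\delta/m) = \delta$, so that $\Pr(\norm{\Hn-\Hb}_\infty \le \lambda\log(m/\delta)) \ge 1-\delta$, which is exactly the claimed inequality once one substitutes $m = \abs{\Fset}+\abs{\Eset}+\abs{\Vset}$. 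I would write this as a short display chaining the per-coordinate tail equality, the union bound, and the complement.

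There is really no deep obstacle here; the only point requiring a sentence of care is the truncation step, since a priori clipping could in principle increase the deviation from $\Hb$ — but because $\Hb$ is non-negative and clipping replaces any negative $\hn{i}$ by $0$, which lies between $\hn{i}$ and $\h{i}$, the deviation can only shrink. I would state this explicitly as a one-line lemma-style remark (``projection onto $[0,\infty)$ is $1$-Lipschitz and fixes $\h{i}$'') so the reader sees why post-truncation counts are still covered by the pre-truncation tail bound. The remaining steps are routine, so the write-up should be four or five lines of display math plus two sentences of justification.
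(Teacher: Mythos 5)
Your proposal is correct and follows essentially the same route as the paper's own proof: the exact Laplace tail $\Pr(\abs{Z}>t)=\exp(-t/\lambda)$, a union bound over the $\abs{\Fset}+\abs{\Eset}+\abs{\Vset}$ coordinates, and the complement. Your extra remark that truncation at zero only moves each noisy count toward the non-negative true count is a point the paper's proof silently skips, so including it is a small improvement rather than a deviation.
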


\begin{proof}
	For convenience, we define the combined index set $\Hset = \Fset \cup \Eset \cup \Vset$, noting that
	$\abs{\Hset} = \abs{\Fset} + \abs{\Eset} + \abs{\Vset}$. Recall that by the definition of \DP, we have that
	\begin{eqnarray*}
		\forall i \in \Hset, && \hn{i} = \h{i} + Y_{i}, \enspace Y_{i} \sim Lap(0; \lambda)\enspace.
	\end{eqnarray*}
	By the cumulative distribution function of the zero-mean Laplace, it follows that
	\begin{eqnarray*}
		\forall~ i \in \Hset, && \Prob{\abs{Y_i}\ge z} = \exp\left(\frac{-z}{\lambda}\right)\enspace,
	\end{eqnarray*}
	for any scalar $z>0$. By the union bound it follows that
	\begin{eqnarray*}
		\mathrm{Pr}\left(\bigcup \limits_{i \in \Hset} \{\abs{Y_i} \ge z\}\right) &\le& \sum_{i \in \Hset} \Prob{\abs{Y_i} \geq z} \\
		&=& \abs{\Hset} \times \exp\left(\frac{-z}{\lambda}\right)\enspace.
	\end{eqnarray*}
	Applying De Morgan's law,
	\begin{eqnarray*}
		\mathrm{Prob}\left(\bigcap \limits_{i \in \Hset} \{\abs{Y_i} < z\}\right) & =& 1 - \mathrm{Prob}\left(\bigcup \limits_{i \in \Hset} \{\abs{Y_i} \ge z\}\right) \\
		&\geq& 1- \abs{\Hset} \times \exp\left(\frac{-z}{\lambda}\right)\\  
		& \triangleq& 1 - \delta \enspace.
	\end{eqnarray*}
	Solving yields 
	\begin{eqnarray*}
		z &=& \lambda \times \log\left(\frac {\abs{\Hset}}{\delta}\right)\enspace,
	\end{eqnarray*}
	so that
	\begin{eqnarray*}
		\mathrm{Prob}\left(\bigcap \limits_{i \in \Hset} \left\{ \abs{Y_i} < \lambda \log\left(\frac {\abs{\Hset}}{\delta}\right)\right\}\right) &\geq& 1 - \delta\enspace.
	\end{eqnarray*}
	The result follows from $\Hn - \Hb = \mathbf{Y}, \enspace \mathbf{Y} \sim Lap(\lambda)$ iid. 
\end{proof}

\paragraph{Computational Complexity.} As our partition has $n$ rows and columns, \DP's time/space complexities are efficient at $O( n^2)$.

\subsection{Algorithm: Linear Programming}\label{sec:linprog}
After additive randomised perturbation with \DP, we apply constrained inference to smooth this noise, as detailed below. We begin by defining constrained inference, followed by a set of public consistency constraints.

\subsubsection{Constrained Inference: LAD}\label{sec:lp-lad}
Constrained inference models the noisy counts output by \DP as noisy observation of latent counts which are themselves related according to a set of constraints. Inference effectively smooths the differentially-private release, potentially improving utility without affecting privacy. Previously ordinary least squares (OLS) has driven constrained inference~\cite{Cormode12PSD,HayRMS10}. Here we propose instead to use least absolute deviation (LAD) (also referred to as least absolute residuals, least absolute errors and least absolute value)~\cite{Dielman2005}. In contrast to OLS, LAD has the benefit of being robust to outliers. LAD is ideal for our setting, since its choice of minimising $L_1$ error corresponds to maximising the exponential of the negative $L_1$: a Laplace noise model, akin to maximum-likelihood estimation, matching \DP precisely.

\begin{definition} Let \Hb be the Euler histogram counts with a set of defined constraints, $\mathcal{C}$. Given noisy histogram counts, \Hn, constrained LAD inference returns vector \Hc, that satisfies the constraints $\mathcal{C}$ while minimising $\norm{\Hc-\Hn}_1 $.
\end{definition}

\begin{proposition}\label{prop:lad-mle}
	Suppose Alice (\alice) wished to communicate to Bob (\bob) her parameter vector $\pmb{\theta}\in\Theta$ some Euclidean parameter family known to \bob, but that her communication of $\pmb{\theta}$ passed through a noisy channel specified by the Laplace mechanism: \bob observes $\pmb{\theta}$ with additive i.i.d. zero-mean Laplace with known scale $\lambda>0$. Then LAD corresponds to \bob using maximum-likelihood estimation to recover $\pmb{\theta}$.
\end{proposition}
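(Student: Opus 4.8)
The plan is to write down the log-likelihood of \bob's observation under the stated Laplace channel and to show that maximizing it over the admissible parameter family coincides, term for term, with the $L_1$-minimization program in the definition of constrained LAD inference above.

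First I would fix notation. Write Alice's parameter vector as $\pmb{\theta}=(\theta_1,\dots,\theta_m)\in\Theta$ and \bob's received vector as $\mathbf{y}=(y_1,\dots,y_m)$, so that the channel model gives $y_i=\theta_i+Z_i$ with $Z_i \stackrel{iid}{\sim} Lap(0;\lambda)$. The density of a single $Lap(0;\lambda)$ variable at $z$ is $\tfrac{1}{2\lambda}\exp(-\abs{z}/\lambda)$, hence, conditioned on $\pmb{\theta}$, the density of the $i$-th coordinate of \bob's observation is $\tfrac{1}{2\lambda}\exp(-\abs{y_i-\theta_i}/\lambda)$. By independence of the per-coordinate noise, the joint likelihood factorizes as $L(\pmb{\theta};\mathbf{y})=\prod_{i=1}^m \tfrac{1}{2\lambda}\exp(-\abs{y_i-\theta_i}/\lambda)=(2\lambda)^{-m}\exp\!\big(-\tfrac{1}{\lambda}\norm{\mathbf{y}-\pmb{\theta}}_1\big)$, so that $\log L(\pmb{\theta};\mathbf{y})=-m\log(2\lambda)-\tfrac{1}{\lambda}\norm{\mathbf{y}-\pmb{\theta}}_1$.

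Then, since the additive term $m\log(2\lambda)$ and the positive factor $1/\lambda$ do not depend on $\pmb{\theta}$, maximizing $\log L(\pmb{\theta};\mathbf{y})$ over $\pmb{\theta}\in\Theta$ is equivalent to minimizing $\norm{\mathbf{y}-\pmb{\theta}}_1$ over $\pmb{\theta}\in\Theta$. Identifying \bob's observation $\mathbf{y}$ with the noisy counts \Hn{} produced by \DP, the recovered parameter vector $\pmb{\theta}$ with \Hc, and the admissible set $\Theta$ with the consistency constraints $\mathcal{C}$, this is precisely the constrained LAD inference of the preceding definition; hence the two optimization problems have the same optimal value and the same set of optimizers, which is the claimed correspondence.

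The likelihood computation itself is routine; the only point warranting care is bookkeeping around the feasible region and non-uniqueness. The $L_1$ objective is convex but not strictly convex, so the maximum-likelihood estimate need not be a single point, and one should state the conclusion as an equivalence of optimization problems (equal optimal values, equal argmin sets) rather than as uniqueness of the recovered vector; I also want to be explicit that the constraint set $\mathcal{C}$ is exactly the specification of the parameter family $\Theta$ known to \bob, so that the feasibility region on both sides genuinely matches. I expect this matching step — not the density manipulation — to be the only place requiring a careful sentence.
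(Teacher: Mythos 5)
Your proof is correct and follows essentially the same route as the paper's: write the joint Laplace likelihood, observe it equals $(2\lambda)^{-m}\exp(-\|\mathbf{y}-\pmb{\theta}\|_1/\lambda)$, and conclude via monotonicity that the MLE over $\Theta$ is the constrained $L_1$ minimizer, with the identification of $\mathbf{y}$ with \Hn, $\pmb{\theta}$ with \Hc, and $\Theta$ with the constraint set $\mathcal{C}$ made exactly as in the paper's follow-up discussion. Your added remark on non-uniqueness of the argmin set is a reasonable refinement not present in the paper, but the core argument is the same.
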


\begin{proof}
	In this abstract setting (that applies beyond our mechanisms, to the Laplace mechanism more generally) suppose that \bob observes via the channel from \alice
	$$X_i\stackrel{indep.}{\sim} Lap(\theta_i, \lambda)\enspace, \enspace i\in\{1,\ldots,m\}\enspace.$$
	Then the joint likelihood of the $X_i$, known to \bob, is given by
	$$\prod_{i=1}^m \frac{1}{2\lambda}\exp\left(-\frac{|x_i-\theta_i|}{\lambda}\right)=\frac{1}{2^m\lambda^m}\exp\left(-\frac{\|\mathbf{x}-\pmb{\theta}\|_1}{\lambda}\right)\enspace.$$
	The MLE of unknown $\pmb{\theta}$ given the observations and known scale $\lambda$ corresponds to the constrained optimisation
	\begin{eqnarray*}
		\hat{\pmb{\theta}}_{MLE}
		&\in& \arg\max_{\pmb{\theta}\in\Theta}\ \frac{1}{2^m\lambda^m}\exp\left(-\frac{\|\mathbf{x}-\pmb{\theta}\|_1}{\lambda}\right) \\
		&=& \arg\max_{\pmb{\theta}\in\Theta}\ \log\left(\exp\left(-\frac{\|\mathbf{x}-\pmb{\theta}\|_1}{\lambda}\right)\right) \\
		&=&\arg\max_{\pmb{\theta}\in\Theta}\ -\frac{\|\mathbf{x}-\pmb{\theta}\|_1}{\lambda} \\
		&=& \arg\min_{\pmb{\theta}\in\Theta}\ \|\mathbf{x}-\pmb{\theta}\|_1 \enspace.
	\end{eqnarray*}
	The first equality follows from a strictly monotonic transformation of the objective function, the second follows from cancelling the logarithmic and exponential functions, and the final equality follows from another strictly monotonic transformation. This last formulation corresponds to constrained LAD.
	\end{proof}

The application of this result to our present setting involves equating the latent parameter vector to the raw histogram $\mathbf{H}$, Laplace-perturbed observations to $\mathbf{H}'$, and the parameter family $\Theta$ to constraint set $\mathcal{C}$ (to be discussed below). This connection demonstrates that our use of LAD is principled. Given public prior knowledge of counts, one could incorporate a corresponding (public) prior distribution on the $\theta$ and perform \emph{maximum a posteriori} (MAP) point-estimation which would in-turn correspond to placing a regularisation term on the LAD objective. We leave such extensions to future work.

\paragraph{Consistency.}
We define three constraints C1, C2 and C3 for Euler histograms as follows. Our consistency constraints consider the relationships between face, edge and vertex counts.  Every increment to an edge count must correspond to an increment to the counts of both incident faces as well; and similarly for an increment to a vertex count, the corresponding four incident edge counts must be incremented. Finally query regions should respond with non-zero count estimates. These represent the intuition behind our three sets of consistency constraints.

For ease of exposition, we refer to face, edge and vertex components of \Hb by $F_i, E_i, V_i$ respectively. The meaning will be apparent from context.

\begin{constraint}\label{const:C1} 
	Every edge count is less than or equal to the minimum value of its two incident faces.
	\begin{eqnarray*}
		\ec{i} \le \fc{j}\ \ \forall i \in \Eset, \forall j \in \Fset_i; \enspace \Fset_i = \{j \in \Fset: j\text{ incident to } i \in \Eset\}
	\end{eqnarray*}
\end{constraint}

\begin{constraint}\label{const:C2} 
	Every vertex count is less than or equal to its four incident edges' counts. 
	\begin{eqnarray*}
		\vc{i} \le \ec{j}\ \ \forall i \in \Vset, \forall  j \in \Eset_i; \enspace \Eset_i = \{j \in \Eset: j\text{ incident to } i \in \Vset\}
	\end{eqnarray*}
\end{constraint}
\begin{constraint}\label{const:C3}
	Every two by two grid partition should have a non-negative count computed by Euler, Equation~\eqref{eq:euler}. 
	\begin{eqnarray*}
		\fc{j} - \ec{k} + \vc{i} \ge 0  &&  \forall i \in \Vset, \forall j \in \Fset_i,  \forall  k \in \Eset_i 
	\end{eqnarray*}  
	\begin{eqnarray*}
		\mbox{where}\ \ \ \ \ 	
		\Fset_i &=& \{j \in \Fset: j\text{ incident to } i \in \Vset\}  \\
		\Eset_i &=& \{k \in \Eset: k\text{ incident to } i \in \Vset\}\enspace.
	\end{eqnarray*}
\end{constraint}

Figure~\ref{fig:LP} demonstrates the output of \LP algorithm that smooths the noise and applies consistency constraints.

\paragraph{Algorithm. }
We consider two constrained inference programs for enforcing these constraints. Both minimise the change to the histogram counts subject to the constraints. The first, LAD, minimises counts with respect to the $L_1$-norm.
\begin{eqnarray*}
	\min_{\Hc} \norm{\Hc-\Hn}_1 \ \   \mbox{s.t.} \   \Hc \geq \zero \enspace \ \  \text{Constraints } C_1, C_2, C_3
\end{eqnarray*}

By introducing a primal variable per histogram cell count, we can transform this to the following linear program
\begin{eqnarray}
\min_{\Hc, \mathbf{h}} && \sum_{i=1}^{|\Hset|}h_i \label{prog:L1} \\ 
\mbox{s.t.} && \Hc, \mathbf{h} \geq \zero \nonumber \\ 
&& \hn{i} - \hc{i} \leq h_i \enspace \forall i \in \Hset \nonumber\\ 
&& \hc{i} - \hn{i} \leq h_i \enspace \forall i \in \Hset \nonumber\\ 
&& \text{Constraints } C_1, C_2, C_3 \nonumber 
\end{eqnarray}

Alternatively we could adopt the $L_\infty$-norm for minimising the change to the histogram cell counts, as in the following program.
\begin{eqnarray*}
	\min_{\Hc}  \norm{\Hc-\Hn}_\infty  \ \ \  \mbox{s.t.} \  \Hc \geq \zero  \ \ \   \text{Constraints } C_1, C_2, C_3\\
\end{eqnarray*}

And again we may transform this program to an equivalent LP, this time by introducing only a single new primal variable
\begin{eqnarray}
\min_{\Hc, h} && h  \label{prog:Linfty} \\ 
\mbox{s.t.} && \Hc, h \geq \zero \nonumber \\ 
&& \hn{i} - \hc{i} \leq h \enspace \forall i \in \Hset \nonumber \\ 
&& \hc{i} - \hn{i} \leq h \enspace \forall i \in \Hset \nonumber \\ 
&& \text{Constraints } C_1, C_2, C_3\nonumber 
\end{eqnarray}

We analyse Program~\eqref{prog:Linfty}, however we recommend that in practice Program~\eqref{prog:L1} be used since it is better able to minimise change to all cell counts (and is derived according to the MLE principle as per Proposition~\ref{prop:lad-mle}), while Program~\eqref{prog:Linfty} only minimises the maximum error. Algorithm~\ref{algo:linprog} and our experiments reflect this recommendation.

\IncMargin{1em}
\begin{algorithm}[t]
	\LinesNumbered
	\SetKwInOut{Input}{Input}\SetKwInOut{Output}{Output}
	
	\Input{Noisy Histogram: $(\mathbf{P},\Hn,\Fset,\Eset,\Vset)$}
	\Output{Consistent Histogram: $(\mathbf{P},\Hc,\Fset,\Eset,\Vset)$}
	
	Solve Program~\eqref{prog:L1}.
	
	\caption{LinProg (LP): Linear Programming}\label{algo:linprog}
	
\end{algorithm}
\DecMargin{1em}

\paragraph{Privacy.}
Since \LP depends only on the output of \DP, it preserves the same level of differential privacy (\cf Lemma~\ref{lem:post-immunity}).

\paragraph{Utility.} We can establish high-probability utility bounds on \LP$(L_\infty)$ that take a similar form to those proved for \DP, but via different arguments.

\begin{theorem}\label{theorem:LP}
	For any confidence level $\delta \in (0,1)$, and for histogram counts \Hn output by \DP and \Hc minimising Program~\eqref{prog:Linfty}, we have
	\begin{eqnarray*}
		\mathrm{Pr}\left(\norm{\Hn - \Hc}_\infty \leq \lambda \log\left(\frac {\abs{\Fset} + \abs{\Eset} + \abs{\Vset}}{\delta}\right)\right) &\geq& 1 - \delta\enspace.
	\end{eqnarray*}
\end{theorem}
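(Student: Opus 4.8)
The plan is to read Program~\eqref{prog:Linfty} as computing, over the feasible set, the smallest $L_\infty$ distance from a consistent histogram to $\Hn$, and then to exhibit one cheap feasible point — namely the true Euler histogram $\Hb$ — so that the bound reduces to Theorem~\ref{theorem:DP}.

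First I would observe that at any optimum of Program~\eqref{prog:Linfty} the scalar $h$ is driven down to $\max_{i\in\Hset}\abs{\hc{i}-\hn{i}}=\norm{\Hc-\Hn}_\infty$; hence the program equals $\min\{\norm{\Hc-\Hn}_\infty : \Hc\geq\zero,\ \text{Constraints }C_1,C_2,C_3\}$. The crux is then to certify that $\Hb$ (the output of \Eu on the true bodies) is feasible for this program. Non-negativity is immediate since the entries of $\Hb$ are counts. For $C_1$ and $C_2$, recall from Algorithm~\ref{algo:euler} that a body $x$ increments $\h{i}$ exactly when $x\cap P_i\neq\emptyset$; since an edge region is contained in each of its two incident face regions and a vertex region in each of its four incident edge regions, every body contributing to an edge (resp. vertex) count also contributes to each incident face (resp. edge) count, so $E_i\le F_j$ and $V_i\le E_j$ for all the relevant incidences. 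For $C_3$, the displayed face$-$edge$+$vertex combination is, by Euler's formula~\eqref{eq:euler}, (a restriction of) the count of convex bodies overlapping the corresponding $2\times2$ block, hence non-negative. So $\Hb$ is feasible — which in passing shows the program is feasible and its optimum well-defined.

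Given feasibility of $\Hb$, optimality of $\Hc$ yields $\norm{\Hc-\Hn}_\infty\leq\norm{\Hb-\Hn}_\infty$ deterministically. Now invoke Theorem~\ref{theorem:DP}, which states that with probability at least $1-\delta$ we have $\norm{\Hn-\Hb}_\infty\leq\lambda\log(\abs{\Hset}/\delta)$ with $\abs{\Hset}=\abs{\Fset}+\abs{\Eset}+\abs{\Vset}$; its bound already accounts for the zero-truncation step of \DP, since truncating a perturbed count toward $0$ never increases $\abs{\hn{i}-\h{i}}$ when $\h{i}\geq0$. Chaining the two inequalities gives $\norm{\Hn-\Hc}_\infty\leq\lambda\log(\abs{\Hset}/\delta)$ on the same high-probability event, which is the claim.

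I expect the only non-routine step to be the rigorous verification that $\Hb$ satisfies $C_3$: one must check that the particular single-face, single-edge, single-vertex combination appearing in the constraint (rather than the full summed Euler count over a block) is non-negative for every genuine Euler histogram — essentially an inclusion–exclusion/monotonicity argument on the incidences around a vertex. Everything else is bookkeeping: rewriting the LP as an $L_\infty$ minimisation, and the monotonicity remark about truncation.
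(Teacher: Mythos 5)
Your proposal is correct and follows essentially the same route as the paper's proof: the paper likewise reduces to Theorem~\ref{theorem:DP} by observing that, since \LP minimises distance and \Hb is feasible for Program~\eqref{prog:Linfty}, optimality gives $\norm{\Hn-\Hc}_\infty \le \norm{\Hn-\Hb}_\infty$. The paper asserts this without verifying feasibility of \Hb, so your explicit check is a worthwhile addition; on your flagged concern about $C_3$, when the face $j$ is incident to the edge $k$ the single-term inequality follows from $C_1$ together with $\vc{i}\ge 0$ (any body meeting an edge meets both incident faces), but for a diagonal, non-incident face--edge pair around the vertex the literal per-pair inequality can fail for a genuine Euler histogram (e.g.\ a body touching edge $k$ far from vertex $i$ and face $j$), so the constraint must be read as the summed $2\times 2$ Euler count described in its statement, for which non-negativity of \Hb is immediate from Equation~\eqref{eq:euler}.
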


\begin{proof} We reduce to the bound on \DP, by noting that since \LP is minimising distance, the distance from \Hc to \Hn must be no more than \Hb to \Hn. In other words
	
	\begin{eqnarray*}
		\overbrace{\norm{\Hn-\Hc}_\infty}^\text{LP} &\leq \overbrace{\norm{\Hn-\Hb}_\infty}^\text{Laplace Analysis} \leq& \lambda \log\left(\frac {\abs{\Fset} + \abs{\Eset} + \abs{\Vset}}{\delta}\right) 
	\end{eqnarray*}
	with the final bound holding with probability at least $1-\delta$. 
\end{proof}

\paragraph{Computational Complexity.}
Linear programming interior-point methods---also referred to as barrier algorithms---are polynomial-time, with worst-case complexity of $O(a^{3.5})$~\cite{Karmarkar1984}, for $a$, the number of variables. Therefore, for Euler histograms the time complexity is $O(n^7)$, but in practice it is efficient as demonstrated in our runtime experiments (\cf Section~\ref{sec:timing} for running time).

\subsection{Algorithm: Rounding}
After running \LP, we introduce covertness via \R (Algorithm~\ref{algo:round}). 
This allows the data curator to hide that the data has been perturbed (see Figure\ref{fig:round}). Figure~\ref{fig:QR} depicts an example range query response to privately count the number of users via Equation~\eqref{eq:euler}, $N = F - E + V = 2 $.

\IncMargin{1em}
\begin{algorithm}[t]
	\LinesNumbered
	\SetKwInOut{Input}{Input}\SetKwInOut{Output}{Output}
	
	\Input{Consistent Histogram:  $(\mathbf{P},\Hc,\Fset,\Eset,\Vset)$}
	\Output{Rounded Histogram: $(\mathbf{P},\Hr,\Fset,\Eset,\Vset)$}
	
	\For{$i \in \Fset \cup \Eset  \cup \Vset$}{
		$\hr{i} \longleftarrow round(\hc{i})$
		
	}
	\caption{Rounding (R)}\label{algo:round}
\end{algorithm}
\DecMargin{1em}

\paragraph{Privacy.} 
Since \R depends only on differentially-private data, it also preserves differential privacy (\cf Lemma~\ref{lem:post-immunity}).

\paragraph{Utility.} The analysis of utility for \R is more straightforward than for \DP and \LP.

\begin{lemma}\label{theorem:R}
	If \Hc is the output histogram of \LP and \Hr is the result of \R, then $\norm{\Hc - \Hr}_\infty \leq 0.5$.	
\end{lemma}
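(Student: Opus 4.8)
The claim is that rounding each count to the nearest integer perturbs the count vector by at most $0.5$ in the $\infty$-norm. The plan is to argue coordinate-wise and then take the maximum. For any real number $x$, the function $round(x)$ returns the nearest integer (with some fixed tie-breaking rule for half-integers), so by definition of nearest integer we always have $\abs{x - round(x)} \le 0.5$: the integers partition $\mathbb{R}$ into unit-length half-open intervals each centred on an integer, and any point in such an interval is within $0.5$ of its centre, with equality attained only at the two endpoints (half-integers).

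Applying this with $x = \hc{i}$ for each index $i \in \Fset \cup \Eset \cup \Vset$, we get $\abs{\hc{i} - \hr{i}} = \abs{\hc{i} - round(\hc{i})} \le 0.5$. Since this holds for every coordinate, $\norm{\Hc - \Hr}_\infty = \max_{i \in \Fset \cup \Eset \cup \Vset} \abs{\hc{i} - \hr{i}} \le 0.5$, which is exactly the claimed bound.

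There is essentially no obstacle here: the only thing to be slightly careful about is what ``$round$'' means on exact half-integers, but any reasonable convention (round half up, round half to even, etc.) still lands within $0.5$, so the bound is insensitive to that choice. One may optionally remark that the bound is tight, being attained whenever some $\hc{i}$ is a half-integer. The argument does not use consistency, non-negativity, or any property of \LP beyond the fact that it produces a real-valued vector, so it applies verbatim to the output of any of the preceding stages.
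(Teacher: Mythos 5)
Your proof is correct and matches the paper's (implicit) reasoning: the paper treats this lemma as immediate from the fact that \R applies coordinate-wise nearest-integer rounding, which is exactly the argument you give. Nothing further is needed.
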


\begin{lemma}
	\R is consistent when run after \LP, and so it is also covert.
\end{lemma}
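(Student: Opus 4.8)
The plan is to show two things: first, that the rounded histogram \Hr still satisfies every consistency constraint C1--C3 (so that \R, run after \LP, is consistent in the sense of P\ref{def:consistency}); and second, that consistency plus integrality of all counts yields covertness (P\ref{def:covert}) by the very definition given in the evaluation-metrics section. The second implication is essentially immediate once the first is in hand: covertness was \emph{defined} as a consistent counting mechanism whose query responses are non-negative integers, and since all counts of \Hr are integers and queries respond via the integer linear combination $F-E+V$ of Equation~\eqref{eq:euler}, the responses are integer-valued; consistency then forces them to be non-negative (a body-set realising the counts cannot produce a negative count on any QR). So the real work is the consistency claim.

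For the consistency claim, the key observation is that rounding commutes nicely with the three constraints because each constraint is either an inequality between two individual counts (C1: $\ec{i}\le\fc{j}$; C2: $\vc{i}\le\ec{j}$) or the Euler inequality on a $2\times2$ block (C3: $\fc{j}-\ec{k}+\vc{i}\ge 0$), and all of these relate counts whose fractional parts, after \LP, are forced to agree. The first step is to argue that \LP's output \Hc has the property that along any ``incidence chain'' (a vertex, one of its incident edges, and an incident face), the counts are related by integer gaps: more precisely, I would show that the binding structure of Program~\eqref{prog:L1} together with constraints C1--C3 forces $\hc{i}$ values that are either all integers or, on each connected incidence component, share a common fractional part. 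Actually the cleaner route is to avoid characterising \Hc and instead verify each constraint for \Hr directly using $\lvert\hr{i}-\hc{i}\rvert\le 0.5$ from Lemma~\ref{theorem:R} together with the fact that $round(\cdot)$ is monotone: if $\hc{i}\le\hc{j}$ then $round(\hc{i})\le round(\hc{j})$, which immediately gives C1 and C2 for \Hr. So the second step is just this monotonicity remark applied to the two order constraints.

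The third and main step is C3, the Euler inequality $\fr{j}-\er{k}+\vr{i}\ge 0$ on each $2\times2$ block. Monotonicity of $round$ does not by itself preserve a three-term signed inequality, so here I would exploit the structure more carefully. Because \Hc already satisfies C1 and C2, we have $\hc{v}\le\hc{e}\le\hc{f}$ along the incident vertex/edge/face triple appearing in C3, and moreover the stronger interlocking constraints mean that within a $2\times2$ block the relevant four face counts, four edge counts and one vertex count are tightly ordered; in fact one can show the block's counts are consistent with \emph{some} non-negative integer body configuration only if $\hr{f}-\hr{e}+\hr{v}$, being an integer, is at least the rounded value of the already-non-negative real quantity $\hc{f}-\hc{e}+\hc{v}$ up to the combined rounding slack of $3\times 0.5$. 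The hard part will be ruling out the bad case where three independent roundings conspire to push the signed sum down to $-1$: I would handle this by showing that the fractional parts of $\hc{f},\hc{e},\hc{v}$ cannot be ``generic'', because \LP pins consistent fractional offsets across incident cells (an argument about the polyhedron's vertices: at an optimal vertex of Program~\eqref{prog:L1}, enough constraints are tight that the free fractional degrees of freedom collapse, so incident counts differ by integers). Granting that, $round$ shifts $\fc{j},\ec{k},\vc{i}$ by the \emph{same} fractional correction, the signed sum $\fr{j}-\er{k}+\vr{i}$ equals $\fc{j}-\ec{k}+\vc{i}$ rounded, which is a non-negative integer, and C3 holds. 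I expect this fractional-alignment lemma to be the crux; everything else is bookkeeping with monotonicity of rounding and the definitions of consistency and covertness.
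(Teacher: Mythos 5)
Your overall decomposition is right, and for constraints $C_1$ and $C_2$ your argument is exactly the paper's: rounding is monotone, so a two-term inequality $\hc{i}\le\hc{j}$ survives as $\hr{i}\le\hr{j}$. The genuine gap is in your treatment of $C_3$. The ``fractional-alignment lemma'' you lean on --- that at an optimal vertex of Program~\eqref{prog:L1} the counts within an incidence component share a common fractional part, so that rounding shifts $\fc{j},\ec{k},\vc{i}$ coherently --- is false. The LAD objective $\norm{\Hc-\Hn}_1$ is minimised by setting $\hc{i}=\hn{i}$ wherever the constraints are slack, and at a typical optimum many coordinates sit exactly at their Laplace-perturbed observations $\hn{i}$, whose fractional parts are independent real numbers with no integer relationship. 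Nothing in the polyhedron forces incident counts to differ by integers, so the three roundings really can move independently and your argument as written does not rule out the signed sum dropping to $-1$.

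The good news is that no such lemma is needed, and this is where your proposal overcomplicates what the paper disposes of in one line. For an incident triple (vertex $i$, an edge $k$ incident to $i$, and a face $j$ incident to $k$), constraints $C_1$ and $C_2$ already give the chain $\vc{i}\le\ec{k}\le\fc{j}$ on \Hc. Monotonicity of rounding gives $\er{k}\le\fr{j}$, and $\vr{i}\ge 0$ because $\Hc\ge\zero$; hence
\begin{eqnarray*}
\fr{j}-\er{k}+\vr{i} &\ge& \vr{i} \;\ge\; 0\enspace,
\end{eqnarray*}
so $C_3$ is preserved with no appeal to fractional parts at all. One caveat, which you were implicitly sensing and which affects the paper's one-line proof as much as yours: $C_3$ as literally stated quantifies over all $j\in\Fset_i$ and $k\in\Eset_i$, including pairs where edge $k$ is not incident to face $j$; for those pairs $C_1$ does not supply $\ec{k}\le\fc{j}$, and the inequality genuinely can fail after rounding (take $\fc{j}=0.4$, $\ec{k}=0.6$, $\vc{i}=0.3$, which can be embedded in a fully $C_1$--$C_3$-consistent block yet rounds to $0-1+0=-1$). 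So the clean, provable statement is for incident face--edge pairs. Your covertness step --- consistency plus integrality of \Hr yields non-negative integer $F-E+V$ responses, which is the definition --- is fine.
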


\begin{proof}
	We only need to check the consistency constraints, as to whether \R violates any. This cannot happen, since the smaller side of a constraint inequality rounding up must coincide with the larger side rounding up. Similarly the larger side rounding down must coincide with the smaller side doing the same. Therefore, consistency is invariant to rounding.
\end{proof}

\paragraph{Computational Complexity.} Similar to \DP since our partition has $n$ rows and columns, \R's time and space complexities are efficient at $O( n^2)$.

\subsection{Full Theoretical Analysis}

We are now able to combine the individual utility analyses of the four stages of our approach, into an overall high-probability bound on utility.

\begin{corollary}
	For confidence level $\delta\in(0,1)$, and histogram counts \Hb, \Hr output by \Eu and \R respectively we have that
	\begin{eqnarray*}
		\norm{\Hb - \Hr}_\infty &\leq& \frac{9\big(\ceil*{\frac{B}{d}}+1\big) \ceil*{\frac{B}{d}} }{\epsilon}\log\left(\frac {\frac{4A^2}{d^2} - \frac{4A}{d} + 1}{\delta}\right)+0.5
	\end{eqnarray*}
	holds with probability at least $1 - \delta$.
\end{corollary}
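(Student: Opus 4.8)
The plan is to chain the three per-stage utility results through the triangle inequality for $\norm{\cdot}_\infty$, being careful to collapse the two probabilistic bounds onto a \emph{single} high-probability event so that no union-bound penalty is incurred.

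First I would write
$$\norm{\Hb - \Hr}_\infty \;\leq\; \norm{\Hb - \Hn}_\infty + \norm{\Hn - \Hc}_\infty + \norm{\Hc - \Hr}_\infty\enspace.$$
The last term is bounded deterministically by $0.5$ via Lemma~\ref{theorem:R}. For the first term, Theorem~\ref{theorem:DP} gives $\norm{\Hb - \Hn}_\infty \leq \lambda \log(\abs{\Hset}/\delta)$ with probability at least $1-\delta$, where $\abs{\Hset} = \abs{\Fset}+\abs{\Eset}+\abs{\Vset}$ and $\lambda = \Delta\Hb/\epsilon$. The key observation---already exploited in the proof of Theorem~\ref{theorem:LP}---is that constrained inference only decreases $L_\infty$-distance to the feasible point $\Hb$, so $\norm{\Hn - \Hc}_\infty \leq \norm{\Hn - \Hb}_\infty$ holds pointwise (deterministically), hence on the very same event on which the Theorem~\ref{theorem:DP} bound holds. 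On that event, of probability at least $1-\delta$, the first two terms are each at most $\lambda\log(\abs{\Hset}/\delta)$, giving $\norm{\Hb - \Hr}_\infty \leq 2\lambda\log(\abs{\Hset}/\delta) + 0.5$.

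Then I would substitute the explicit quantities. Lemma~\ref{lem:gs} gives $\Delta\Hb = 4.5(\ceil{B/d}+1)\ceil{B/d}$, so $2\lambda = 9(\ceil{B/d}+1)\ceil{B/d}/\epsilon$, matching the leading coefficient. For the logarithm's argument, with the partition having $n = A/d$ rows and columns one counts $\abs{\Fset} = n^2$, $\abs{\Eset} = 2n(n-1)$, $\abs{\Vset} = (n-1)^2$, whose sum collapses to $(2n-1)^2 = 4n^2 - 4n + 1 = \frac{4A^2}{d^2} - \frac{4A}{d} + 1$, exactly the stated expression. Assembling these substitutions yields the claimed bound.

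The only real subtlety---and hence the step to get right---is the probability bookkeeping: a careless application of the triangle inequality would invoke Theorems~\ref{theorem:DP} and~\ref{theorem:LP} as two independent high-probability events and pay a factor of two (either $1-2\delta$ confidence or an extra $+\lambda\log(\abs{\Hset}/\delta)$ arising from $\log(2\abs{\Hset}/\delta)$). Recognising that the $\Hc$-to-$\Hn$ inequality is deterministic, and that the Theorem~\ref{theorem:LP} bound is merely a consequence of the Theorem~\ref{theorem:DP} event, avoids this; everything else is the routine arithmetic above and the combinatorial cell count.
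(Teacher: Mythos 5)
Your proposal is correct and follows essentially the same route as the paper's own proof: the triangle inequality decomposition $\norm{\Hb-\Hn}_\infty + \norm{\Hn-\Hc}_\infty + \norm{\Hc-\Hr}_\infty$, invoking Theorem~\ref{theorem:DP}, Theorem~\ref{theorem:LP} and Lemma~\ref{theorem:R}, then substituting the global sensitivity from Lemma~\ref{lem:gs} and the count $(2n-1)^2 = \frac{4A^2}{d^2}-\frac{4A}{d}+1$. Your explicit observation that the Theorem~\ref{theorem:LP} bound holds deterministically on the same event as the Theorem~\ref{theorem:DP} bound--so that the overall confidence remains $1-\delta$ rather than $1-2\delta$--is in fact more careful than the paper's write-up, which leaves this bookkeeping implicit under the phrase ``with high probability.''
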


\begin{proof}
	By Theorems~\ref{theorem:DP},~\ref{theorem:LP}, Lemma~\ref{theorem:R}, triangle inequality
	\begin{eqnarray*}
		\norm{\Hb-\Hr}_\infty & \leq& \norm{\Hb-\Hn}_\infty + \norm{\Hn-\Hc}_\infty + \norm{\Hc-\Hr}_\infty \\
		&\leq& 2 \times \lambda \log\left(\frac {\abs{\Fset} + \abs{\Eset} + \abs{\Vset}}{\delta}\right) + 0.5\enspace
	\end{eqnarray*}
	with high probability , where $\lambda = 4.5\left(\ceil*{\frac{B}{d}}+1\right) \ceil*{\frac{B}{d}}/\epsilon$.
	Continuing
	\begin{eqnarray*}
		&& 2 \times \lambda \log\left(\frac {\abs{\Fset} + \abs{\Eset} + \abs{\Vset}}{\delta}\right) + 0.5 \\
		& \leq& \frac{2\big[4\left(\ceil*{\frac{B}{d}}+1\right) \ceil*{\frac{B}{d}} + 1\big]}{\epsilon} \log \left(\frac {\abs{\Fset} + \abs{\Eset} + \abs{\Vset}}{\delta}\right)+0.5\\
		&\leq& \frac{9\left(\ceil*{\frac{B}{d}}+1\right) \ceil*{\frac{B}{d}}}{\epsilon}\log\left(\frac {\frac{4A^2}{d^2} - \frac{4A}{d} + 1}{\delta}\right)+0.5\enspace.\\
	\end{eqnarray*}
	We have used the following counts, where $n$ is the number of rows/columns in the grid-partitioned area of volume $A^2$:
	\begin{eqnarray*}
		\abs{\Fset} &=& n \times n = n^2 = \frac{A^2}{d^2}\enspace; \\
		\abs{\Eset} &\leq& 2n \times (n-1) = 2(n^2 - n) = 2(\abs{\Fset} - \sqrt{\abs{\Fset}})\enspace;\\
		\abs{\Vset} &\leq& (n-1)^2 = n^2 - 2n + 1 = \abs{\Fset} - 2\sqrt{\abs{\Fset}} + 1\enspace;\\
		\abs{\Fset} + \abs{\Eset} + \abs{\Vset} &\leq& \abs{\Fset} + 2(\abs{\Fset} - \sqrt{\abs{\Fset}})+ \abs{\Fset} - 2\sqrt{\abs{\Fset}} + 1 \\
		&=& 4\abs{\Fset} - 4\sqrt{\abs{\Fset}} + 1 \\
		&\leq& 4\frac{A^2}{d^2} - 4\sqrt{\frac{A^2}{d^2}} + 1 \\
		&=& \frac{4A^2}{d^2} - \frac{4A}{d} + 1\enspace.
	\end{eqnarray*}
	This completes the proof. 
\end{proof}

Note, the utility bound's error is $O\left(\frac{B^2}{\epsilon d^2}\log \left(\frac {A^2}{\delta d^2}\right)\right)$ with high probability.

\begin{remark}
	In order to achieve appropriate utility, we recommend selecting cell size $d$, based on third-party requirements. The smallest QR that a third party might run on an area is a reasonable choice for $d$. 
	$B$ can naturally be set by users or service provider. There is little risk that $B$ would be made too large, as a user cannot have a very large region representing their regular location in a short time interval. In \eg fitness applications, users can determine the area in which they usually perform their workouts. 
	Regarding the $\epsilon$ parameter, there are studies in the literature discussing how to set this parameter~\cite{Dwork11,Hsu14}. In fact, there is a trade-off between $\epsilon$ and accuracy. Ultimately these must be set depending on third-party requirements.
\end{remark}

\section{Experimental Study}\label{sec:experiment}

\subsection{Datasets}

We conduct extensive experiments on three real-world datasets that vary in terms of density and concentration of locations. One dataset records GPS coordinates of more than 500 taxis over 30 days in the San Francisco Bay Area; Cab mobility traces are provided through the Cabspotting project~\cite{comsnets09piorkowski}. Here, cabs' GPS points are more concentrated on the financial district and surrounding areas (\cf Figure~\ref{fig:san_point}); we select this area for the empirical study (\cf Figure~\ref{fig:san_zoom}).
Our remaining datasets are in Beijing (Microsoft Research Asia), Geolife project Version 1.3~\cite{ZhengXM10}, as well as T-Drive~\cite{YuanZZXXSH10}. In Geolife 1.3, GPS trajectories were collected by 182 users, containing~18,000 trajectories. 91.5 percent of the trajectories are logged in a dense representation (every 1--5 seconds or every 5--10 meters per point). GeoLife dataset gathered a broad range of users' outdoor movements, including not only everyday routines---\eg going home and commuting to work---but also entertainment and sporting activities, including shopping, sightseeing, dining, hiking, and cycling. T-Drive includes the GPS trajectories of about 10,000 taxis within Beijing, with a total number of points at about 15 million. The distribution of users' spatial bodies over the map of the selected area per dataset is visualised in Figure~\ref{fig:datasets_heatmaps}. The distribution of the GeoLife 1.3 and T-drive are different in the same selected area (\cf Figures~\ref{fig:tdrive_heatmap_mesh} and~\ref{fig:geo13_heatmap_mesh}). Compared to GeoLife, T-Drive has a more spread distribution of users' spatial regions over the partitioned space.

\begin{figure*}[b]
	\centering
	\subfloat[\small T-Drive (Beijing).\label{fig:tdrive_heatmap_mesh}]{
		\includegraphics[scale = .13]{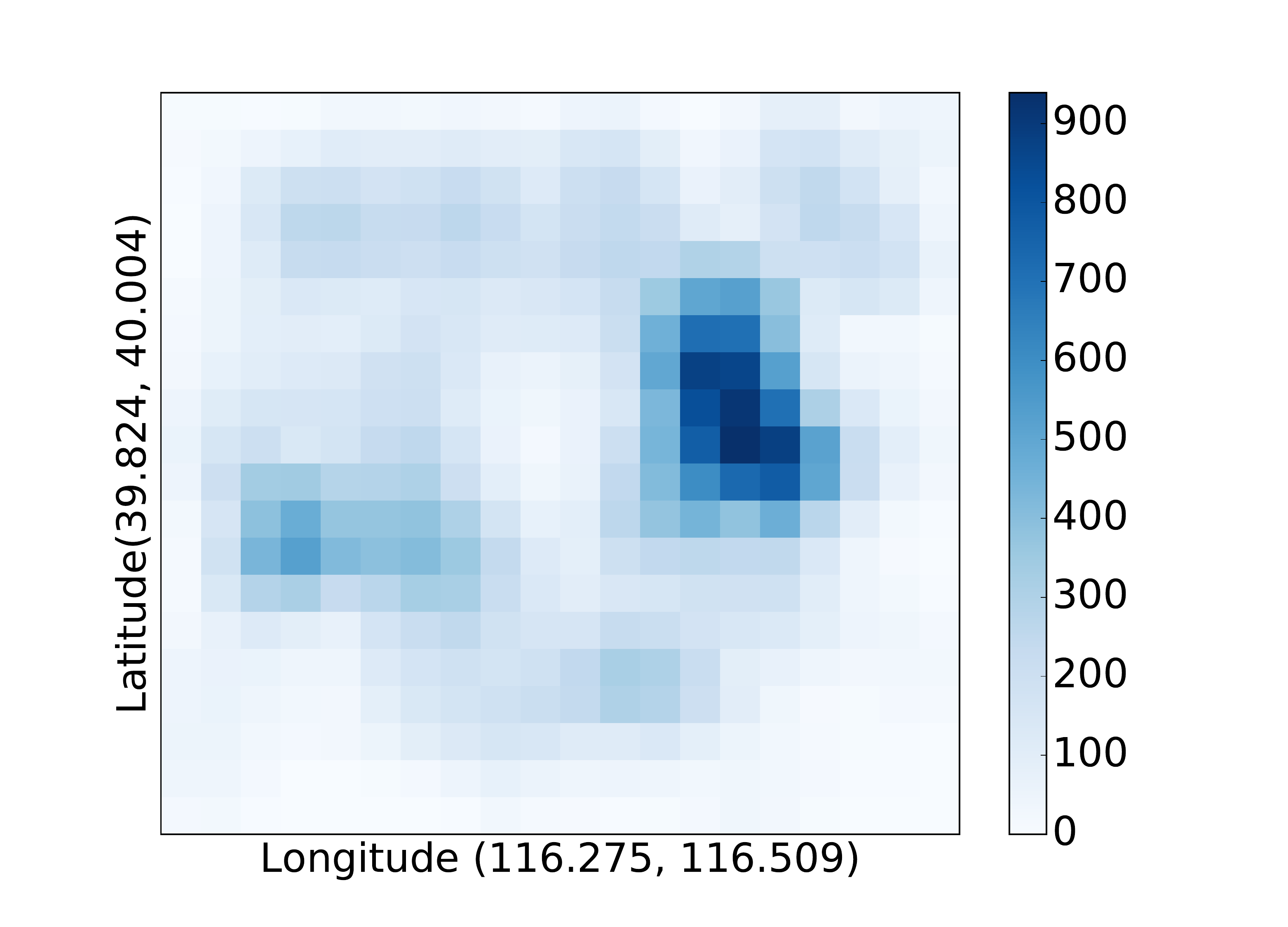}}
	\subfloat[\small GeoLife1.3 (Beijing).\label{fig:geo13_heatmap_mesh}]{
		\includegraphics[scale = .13]{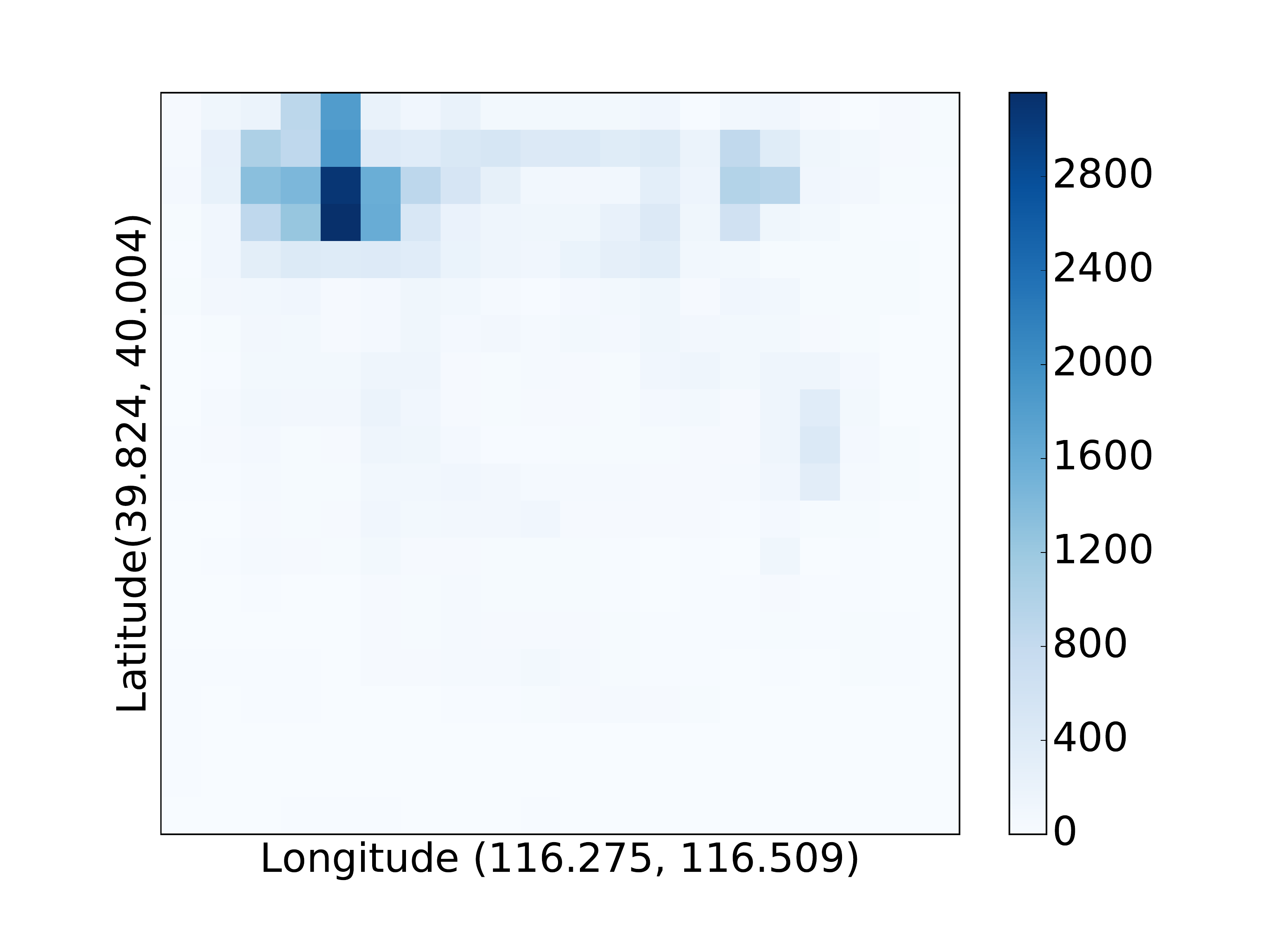}}
	\subfloat[\small Cabspotting (San Francisco).\label{fig:san_zoom_heatmap}]{
		\includegraphics[scale = .13]{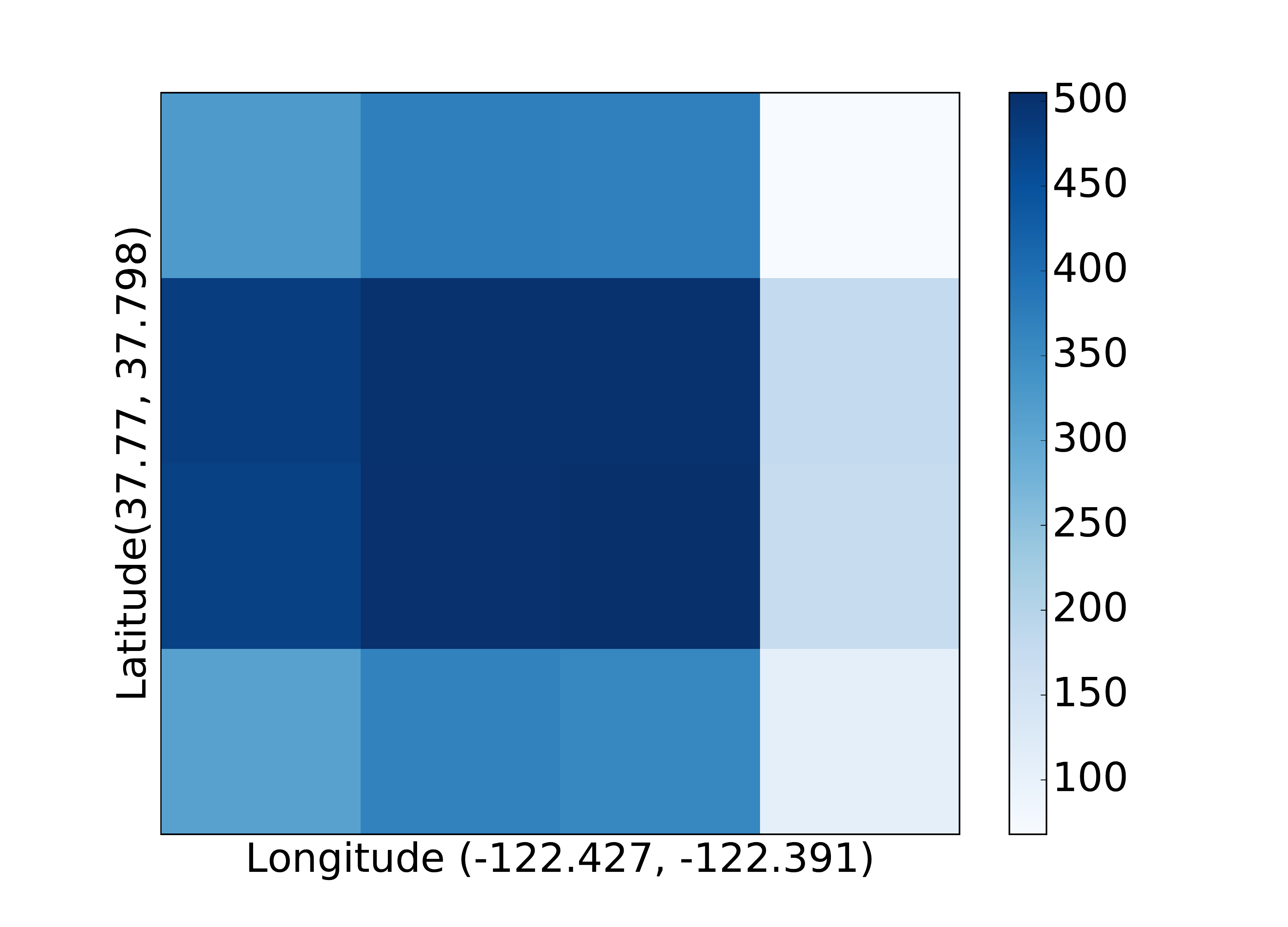}}
	\caption{The density of users' spatial regions for the selected area per dataset, which measure $20km*20km$, $20km*20km$, $3.2km*3.2km$ respectively.}
	\label{fig:datasets_heatmaps}
\end{figure*}

\subsection{Pre-processing}

\begin{figure*}
	\centering
	\subfloat[\label{fig:san_point}]{
		\includegraphics[scale = .15]{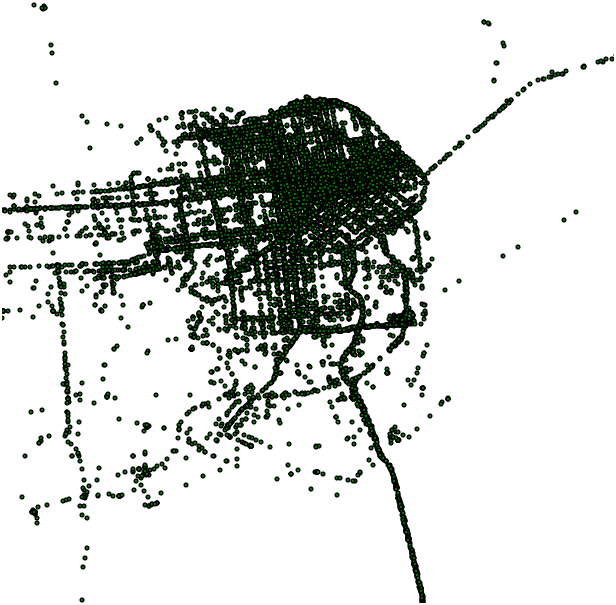}}\hfill
	\subfloat[\label{fig:san_kde}]{
		\includegraphics[scale = .22]{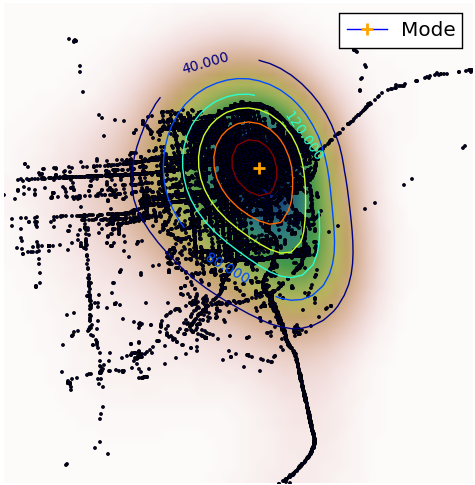}}\hfill
	\subfloat[\label{fig:san_zoom}]{
		\includegraphics[scale = .2]{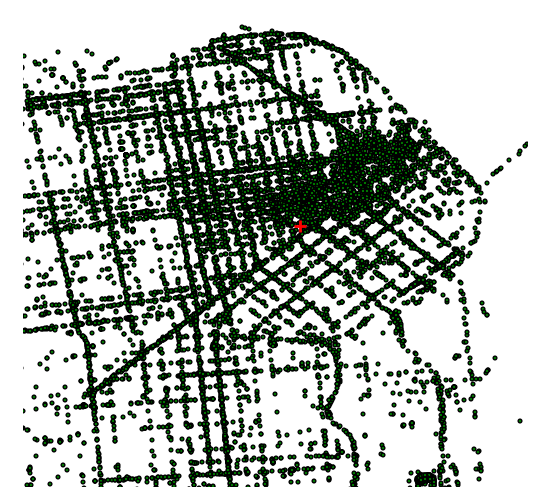}}\hfill
	\subfloat[\label{fig:san_convex}]{
		\includegraphics[scale = .14]{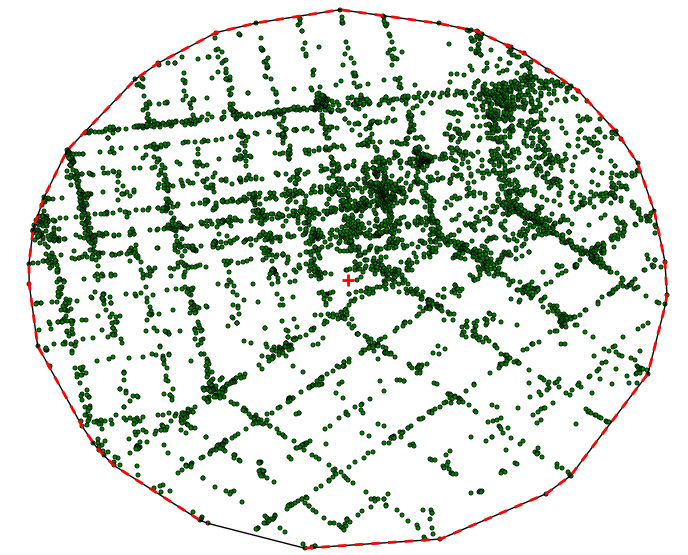}}
	\caption{Pre-processing in experimental setup: Computing the KDE and mode for a set of GPS points, then convex hull. Based on a sample of one cab's GPS points in San Francisco, from Cabspotting.}
	\label{fig:san_sample}
\end{figure*}

We pre-process each dataset to extract convex planar bodies, representing regions where users mostly frequent. This simulates a real application where extraction might be conducted at the end point. For instance, in some applications, \eg in a fitness tracker, users can determine the area in which they usually locate their workouts, in order to obtain a desired service. We conduct the following steps, which represent just one approach to creating convex regions of high visitation. 

\begin{compactitem}
	\item Fit a kernel density estimate (KDE) and consequently take the mode of each user's set of GPS points;
	\item Take k-nearest neighbours (k-NN) points to the mode, \eg for GeoLife, 8 hours corresponds to $k=5760$. If the number of GPS points are less than $k$ we take all points;
	\item Check if all the points are within the defined $B$ diameter, otherwise discard outliers; and
	\item Compute the convex hull of remaining points to create a convex planar body representing an area of frequent visitation.
\end{compactitem}

We use standard libraries from the Scipy package~\cite{Scipy} to compute the kNN and convex hull. To deal with geographic coordinates, Euclidean computations do not directly apply, such as to calculate a distance between two points (here: the opposite corners of a bounding box of $B$ diameter). We refer the interested reader to~\cite{Iliffe2008datums}.

Figure~\ref{fig:san_sample} demonstrates the trajectory of a cab in San Francisco~\ref{fig:san_point}, taken from the Cabspotting project. In this picture (\cf  Figure~\ref{fig:san_kde}), the level sets within the contour lines are convex, and we could have picked these for our convex planar body. But in general, level sets are not convex. Our approach generates a convex approximation. As depicted in  Figure~\ref{fig:san_zoom}, cab GPS points in this dataset are dense and concentrated in a specific area.  Figure~\ref{fig:san_convex} illustrates the extracted convex body. 

After pre-processing, we create histogram counts per convex body, to construct the Euler histograms as our baseline approach and as the basis for our other algorithms. 

For constrained inference, we used the Gurobi optimisation software package~\cite{Gurobi} which implements dual simplex and barrier algorithms to solve \LP, with concurrent optimization. 
We next explain how to choose parameters, then describe our evaluation metrics.

\subsection{Parameter Settings}\label{sec:exp-param}

Initial settings for Beijing with four parameters $A$ (area side length), $d$ (cell size), $B$ (bounded diameter), $\epsilon$ are $ 20km $, $ 1km $, $ 2km $ and $ 1 $ respectively. These settings are applied on T-Drive, and GeoLife1.3 datasets. With regard to San Francisco, Cabspotting dataset, area size is $ 3.2km \times 3.2km $, and cell size is $ 0.8km $ with the remaining parameters the same. The density of users' spatial regions per dataset in a selected spatial partition are illustrated in Figure~\ref{fig:datasets_heatmaps}. As shown in Figures~\ref{fig:tdrive_heatmap_mesh},~\ref{fig:geo13_heatmap_mesh}, T-Drive and GeoLife reflects a different distribution of users' spatial regions, and even for the selected area of San Francisco some partitions are more dense, Figure~\ref{fig:san_zoom_heatmap}.

Table~\ref{table:experiment} demonstrates our experimental parameter values, not including the experiments in Sections~\ref{sec:histograms_diff}--\ref{sec:timing}. As demonstrated, bold parameters are varying.  

Even though the literature on point data~\cite{Cormode12PSD,QardajiYL13} tends to use only specific QR sizes, we vary the QR parameter over the entire range of the area size to more fully evaluate our technique. For experiments where we compare histograms, the $A/d$ ratio, which defines the number of grid cells for each axis, has been kept constant for all datasets (\cf  Sections~\ref{sec:histograms_diff}--\ref{sec:timing}).

\subsection{Evaluation Metrics}\label{sec:evaluation}
Apart from the varying parameter, we keep all other parameters fixed to compute the \term{median relative error} as an empirical measure of utility, as is standard~\cite{Cormode12PSD,QardajiYL13}. We repeat each of the experiments 100 times and compute median relative error. The baseline approach is \Eu as it provides exact answers. Algorithms \DP, \LP, \R that are privacy-preserving, are compared to \Eu. 
Another evaluation metric is the percentage of the \emph{differences between Euler histograms} and the \DP, \LP, and \R approaches over the \emph{difference between Euler histograms} and the \DP (\emph{relative error to} \DP). Here, the $L_1$-norm is adopted.

We also compute the number of times each constraint has been violated in each technique compared to \LP and \R which are the consistent techniques, as well as \Eu as our baseline approach, \emph{consistency constraints violation}. 
Furthermore, we compute the \emph{running time} for each algorithm (\cf Section~\ref{sec:timing}).

\begin{table}
	\centering
	\caption{Experimental settings. This table shows the range of parameters, bolded are those that are varying.}\label{table:experiment}
	\bigskip
	\scalebox{.95}{
		\begin{tabular}{| c | c | c | c | c | c | c |}
			\hline
			\textbf{Dataset} & \textbf{Cell Size (d)} & \textbf{B} &  \textbf{Area Size (A)} & \textbf{A/d} & \textbf{QR Size/Shape} & $\epsilon$\\
			\hline
			\hline
			T-Drive & 1km & 2km &  20km*20km & 20 & \textbf{1-10\%} & 1\\
			\hline
			T-Drive & 1km & 2km &  20km*20km & 20 & \textbf{10-100\%} & 1\\
			\hline
			T-Drive & \textbf{0.66,1,2km} & 2km &  20km*20km & \textbf{30,20,10} & 1\% & 1\\
			\hline
			T-Drive & 2km & 2km &  20km*20km & 10 & 1\% & \textbf{0.1,0.4,0.7,1}\\
			\hline
			GeoLife1.3 & 1km & 2km &  20km*20km & 20 & \textbf{1-10\%} & 1\\
			\hline
			GeoLife1.3 & 1km & 2km &  20km*20km & 20 & \textbf{10-100\%} & 1\\
			\hline
			Cabspotting & 0.8km & 2km &  3.2km*3.2km & 4 & \textbf{10-100\%} & 1\\
			\hline
	\end{tabular}}
\end{table}

\begin{figure}
	\begin{minipage}[t]{1\textwidth}
		\centering
		\subfloat[QR Size (1-10\% of Total Area).\label{fig:tdrive_qr_1-10}]{
			\includegraphics[scale = .2]{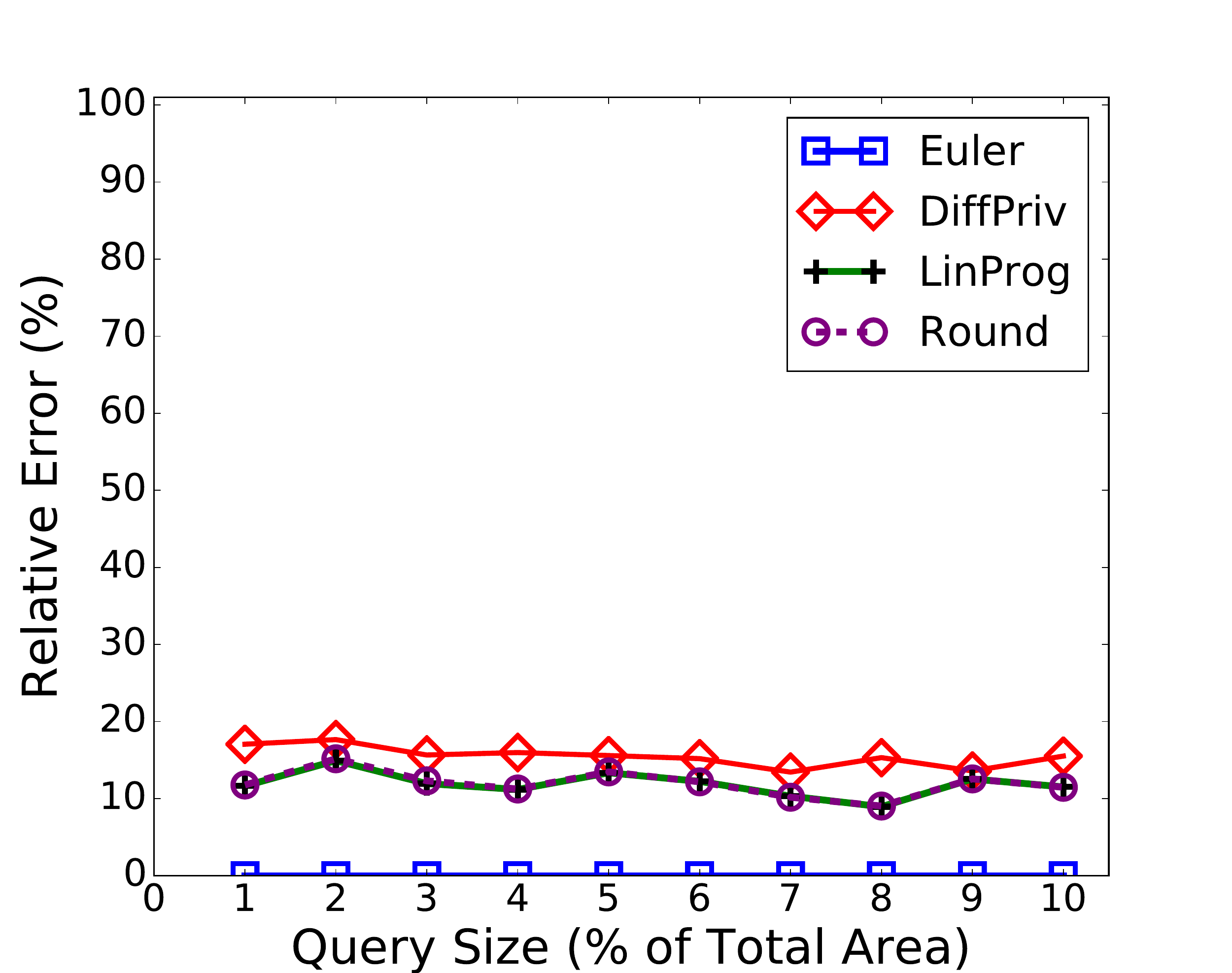}}
		\subfloat[QR Size (10-100\% of Total Area).\label{fig:tdrive_qr_10-100}]{
			\includegraphics[scale = .2]{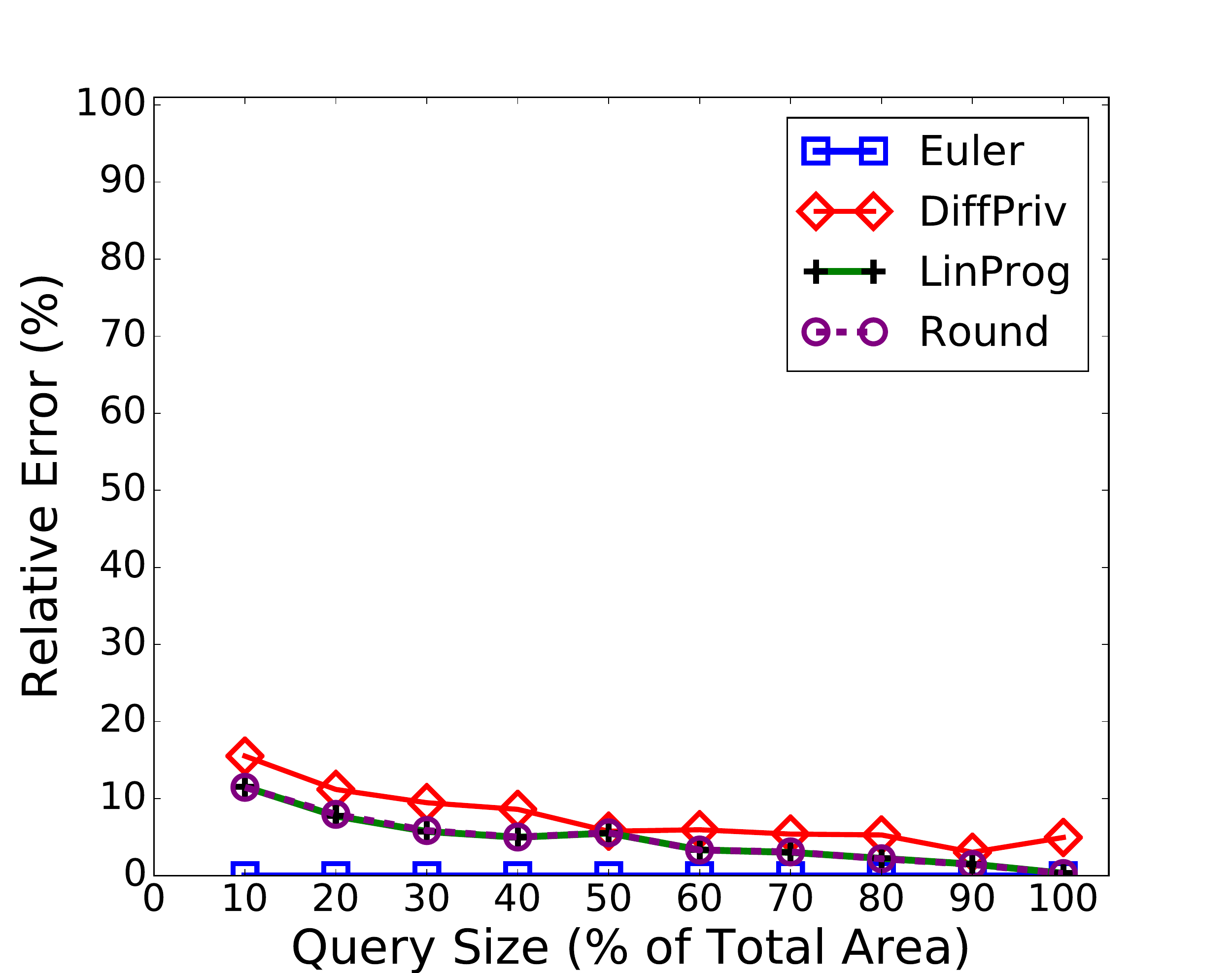}}
		
		\subfloat[Various QR Shapes (smaller).\label{fig:tdrive_QR_shapes_error_bar_small}]{
			\includegraphics[scale = .2]{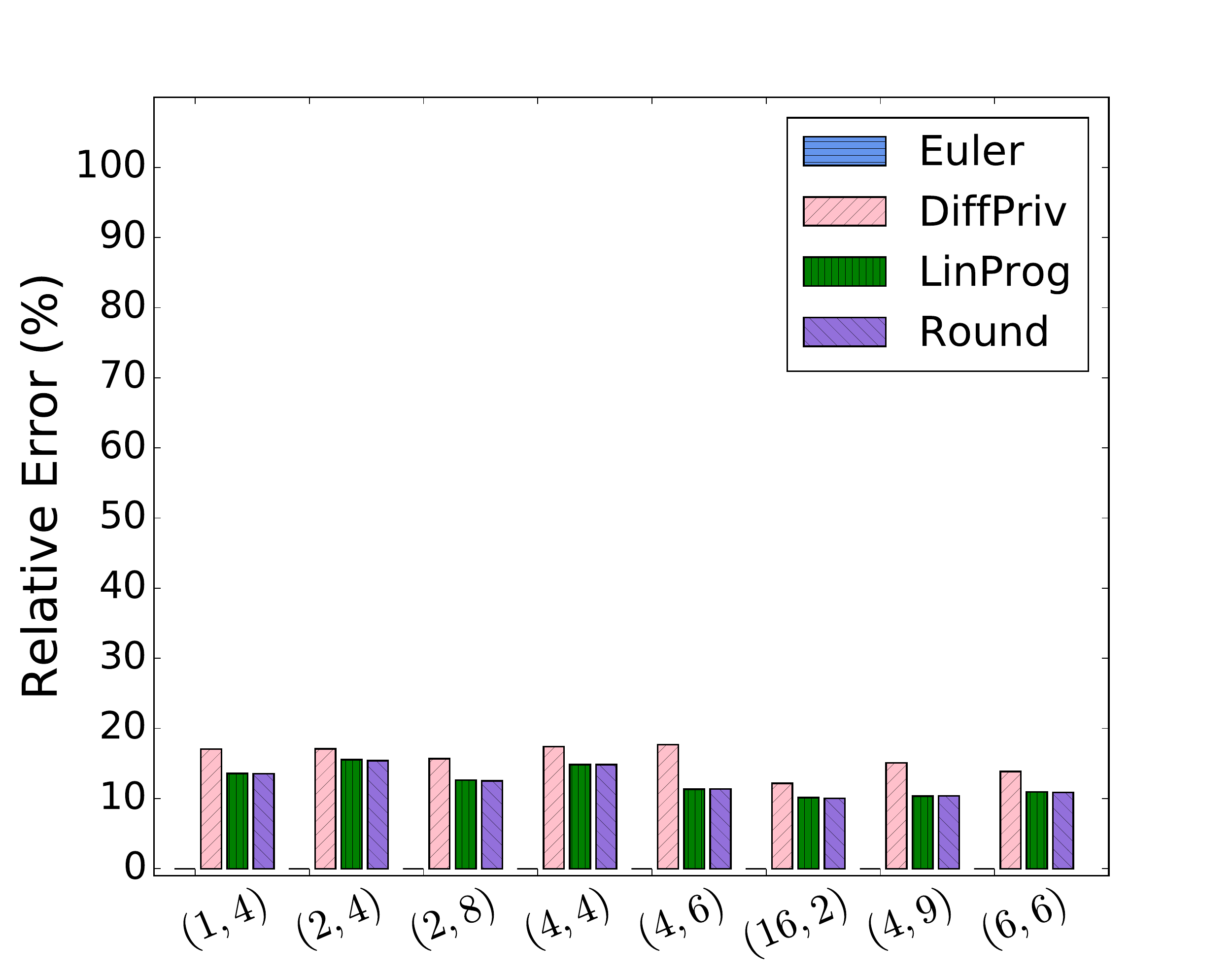}}
		\subfloat[Various QR Shapes (larger).\label{fig:tdrive_QR_shapes_error_bar_large}]{
			\includegraphics[scale = .2]{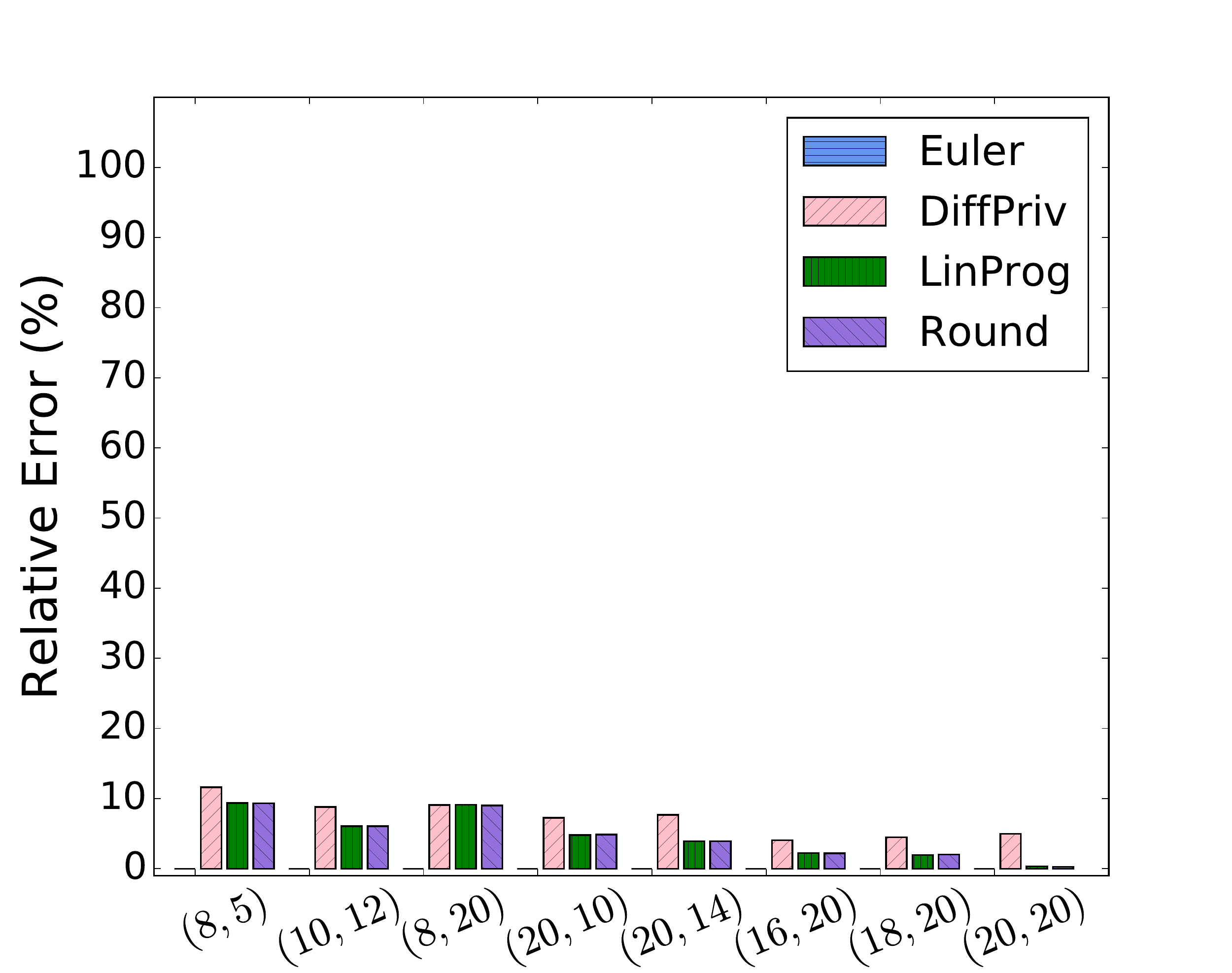}}
		\caption{Median relative error per query size and shape for T-Drive dataset.}
		\label{fig:tdrive_qr_error}
	\end{minipage}
\end{figure}

\begin{figure}
	\begin{minipage}[t]{1\textwidth}
		\centering
		\subfloat[QR Size (1-10\% of Total Area).\label{fig:geo13_qr_1-10}]{
			\includegraphics[scale = .2]{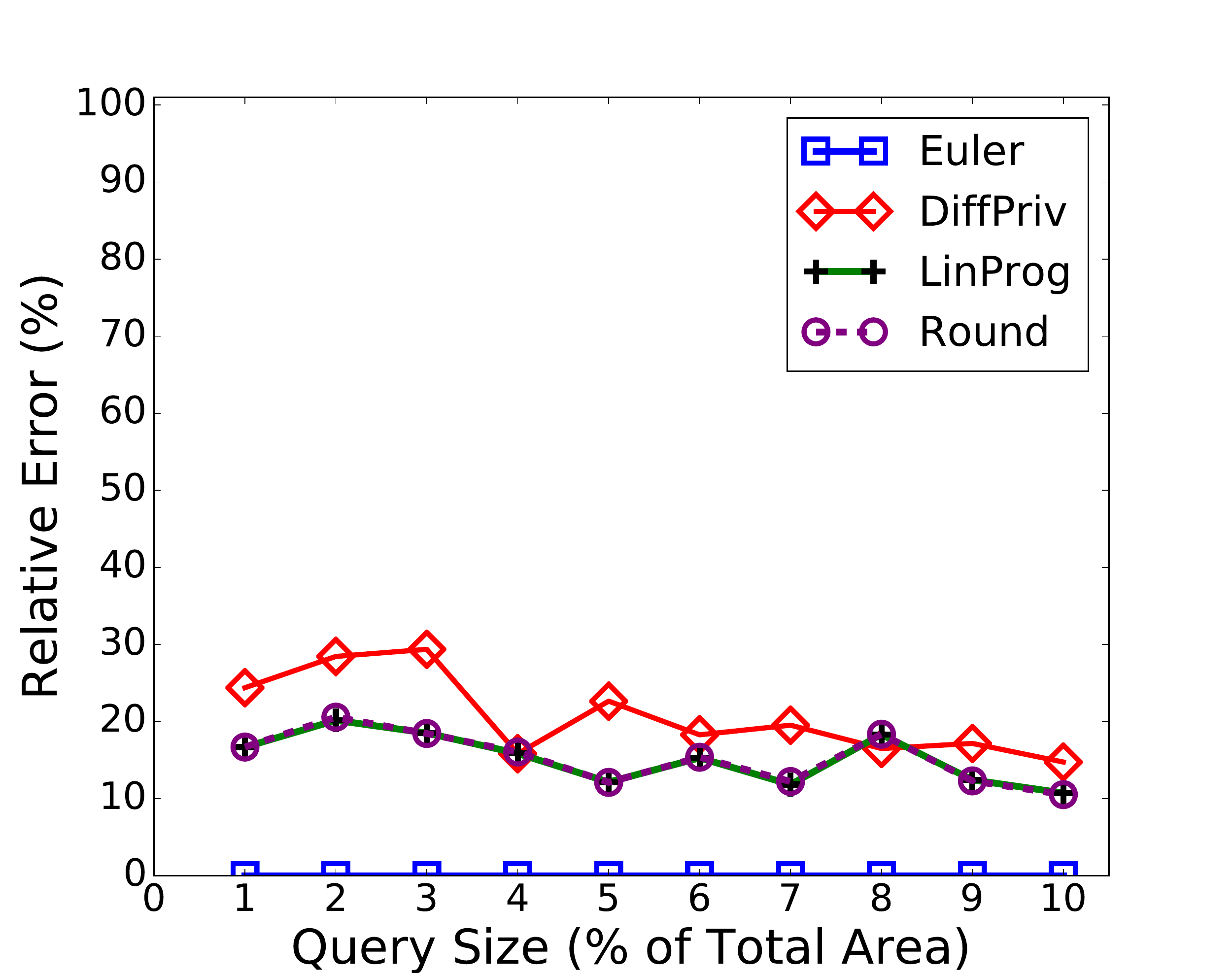}}
		\subfloat[QR Size (10-100\% of Total Area).\label{fig:geo13_qr_10-100}]{
			\includegraphics[scale = .2]{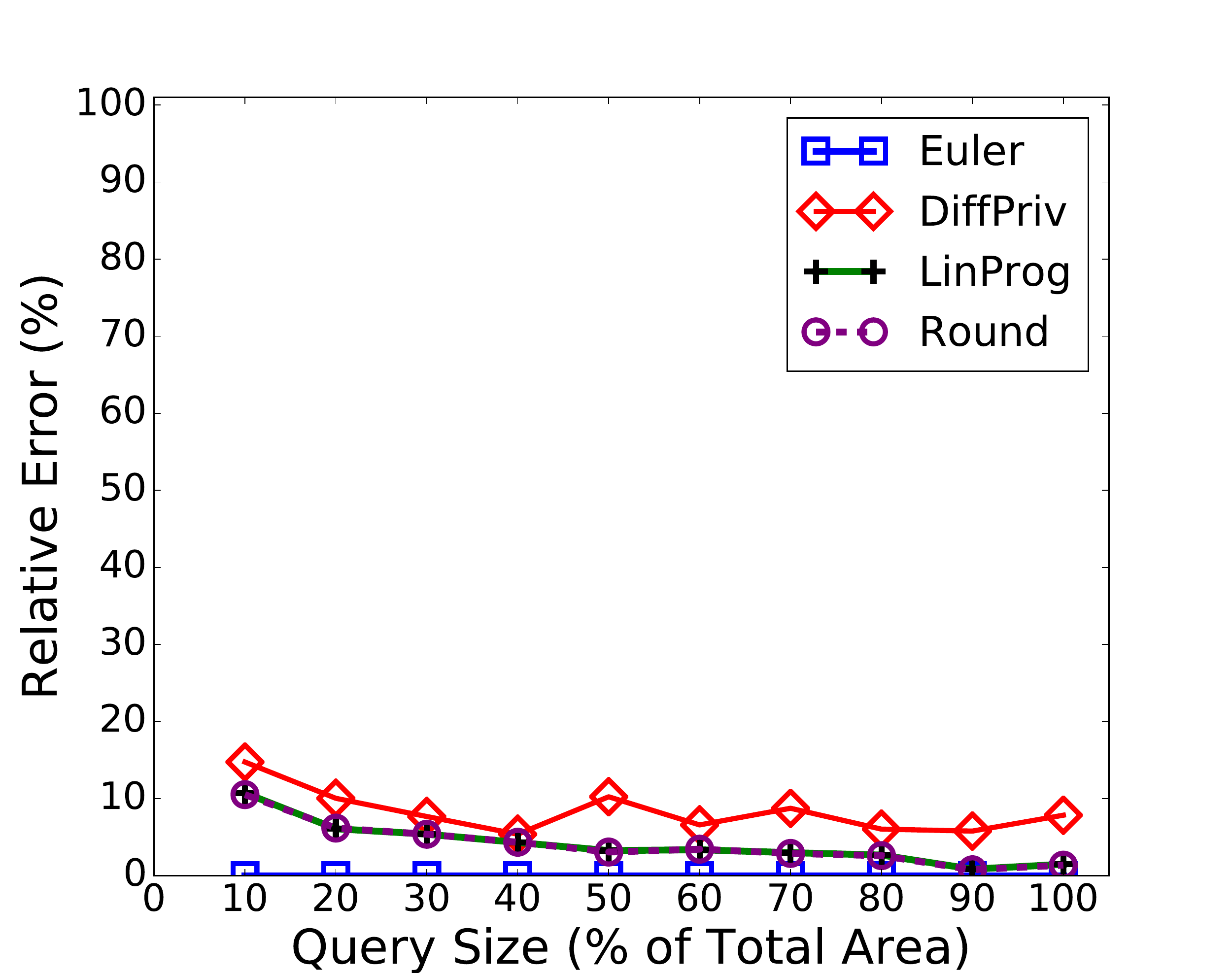}}
		
		\subfloat[Various QR Shapes (smaller).\label{fig:geo13_QR_shapes_error_bar_small}]{
			\includegraphics[scale = .2]{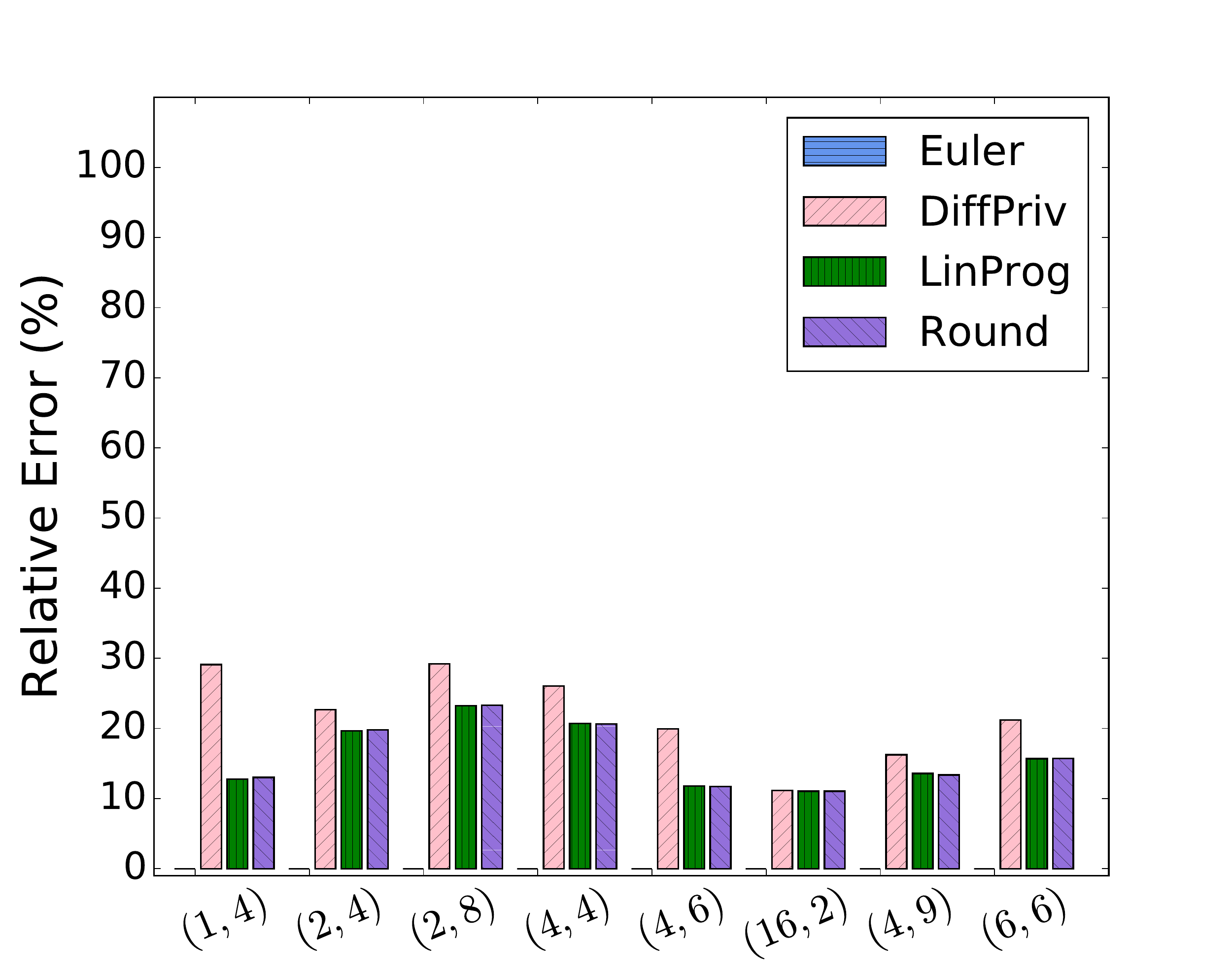}}
		\subfloat[Various QR Shapes (larger).\label{fig:geo13_QR_shapes_error_bar_large}]{
			\includegraphics[scale = .2]{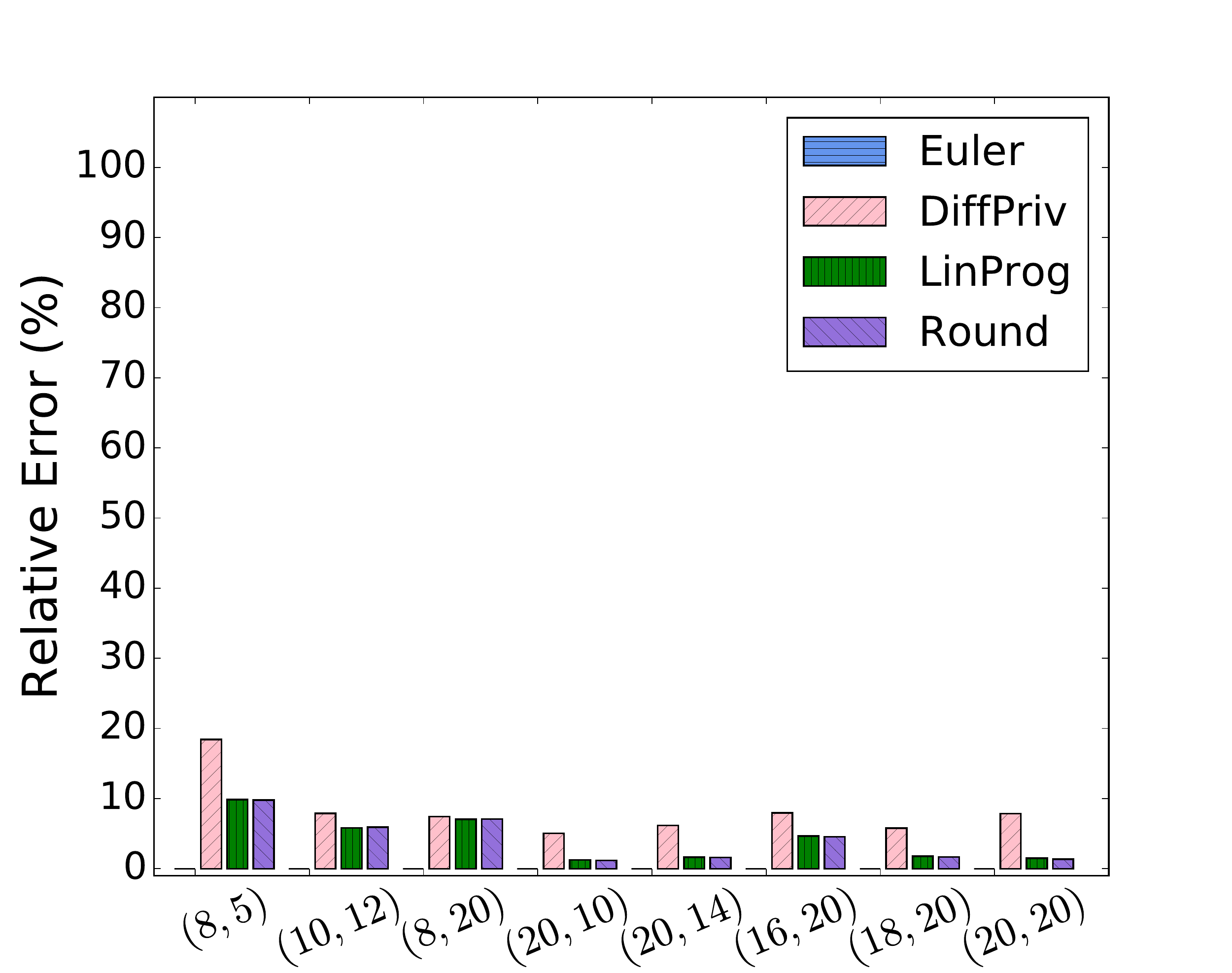}}	
		\caption{Median relative error per query size and shape for GeoLife1.3 dataset.}
		\label{fig:geo13_qr_error}
	\end{minipage}
	\begin{minipage}[t]{1\textwidth}
		\centering
		\subfloat[Various QR Shapes.\label{fig:san_QR_shapes_error_bar}]{
			\includegraphics[scale = .2]{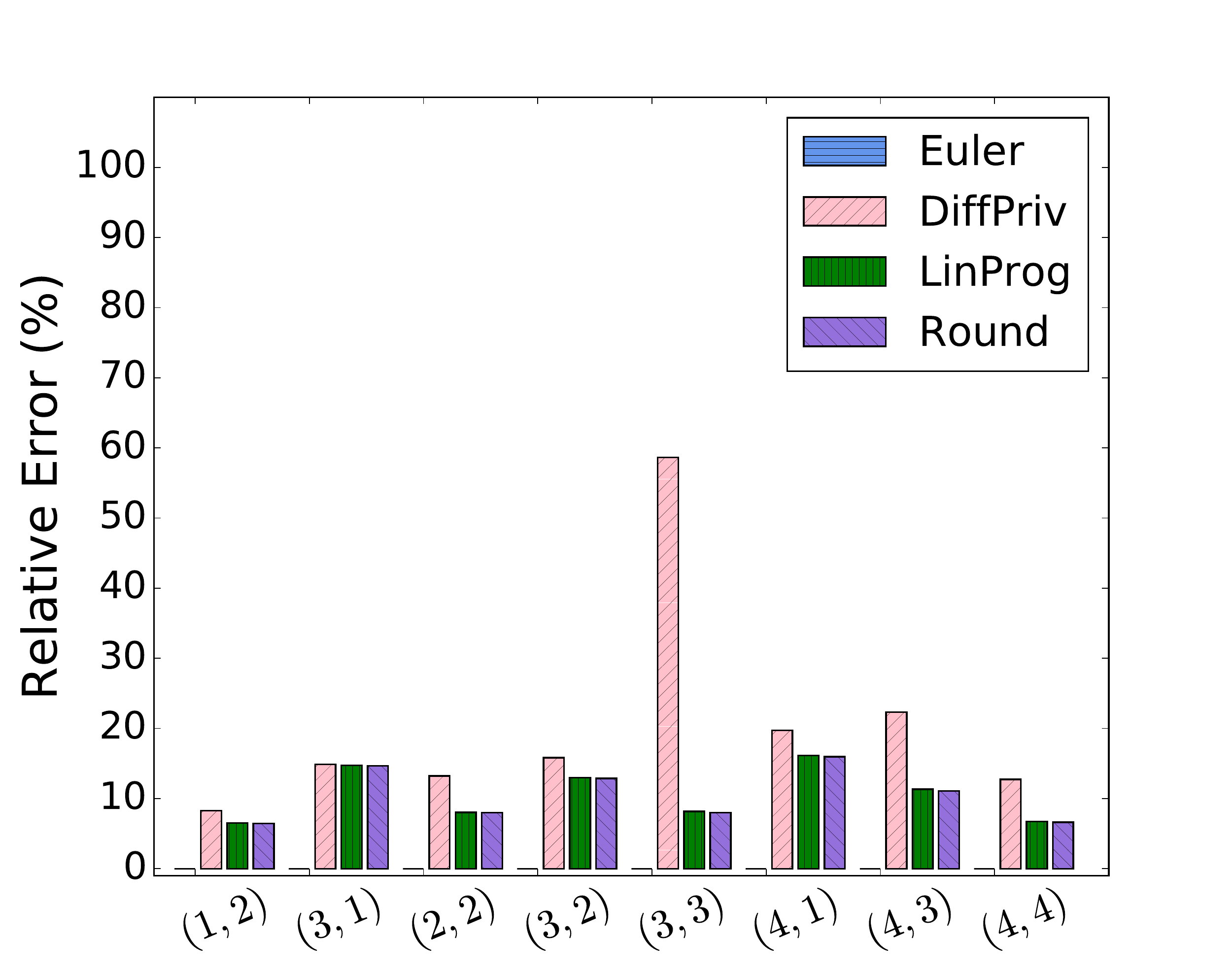}}
		\subfloat[QR Size (10-100\% of Total Area).\label{fig:san_qr_10-100}]{
			\includegraphics[scale = .2]{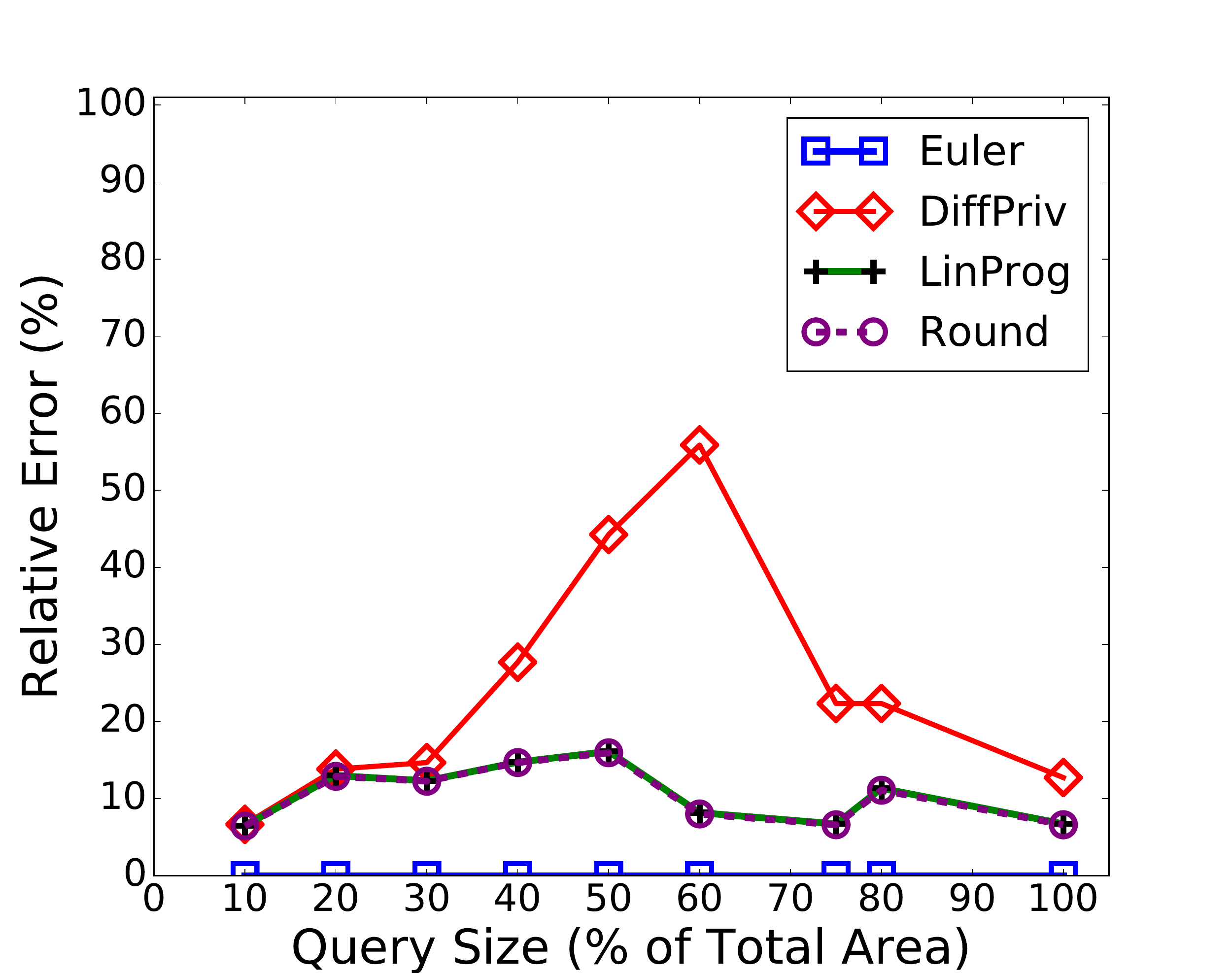}}
		\caption{Median relative error per query size and shape for Cabspotting dataset.}
		\label{fig:san_qr_error}
	\end{minipage}
\end{figure}

\subsection{Varying Query Rectangle Size}\label{sec:qr_size}

In this section we compute the median relative error on all datasets, representing diversity in terms of sparsity, density and concentration, to demonstrate effect on accuracy. 
We fix every parameter, except QR size to run a range query on various sizes, with varying position on the partitioned map, based on definition of a QR as a union of grid cells. Range queries are varied from 1 to 10 and 10 to 100 percent of the total area size of the respective city. The results for various sizes as well as shapes of a range query are shown in  Figures~\ref{fig:tdrive_qr_error}--\ref{fig:san_qr_error}. Various parameters can affect the response to a QR, including shape of a QR, size of a QR, whether convex bodies are sparse in the space or dense, or if they are concentrated or not.  Furthermore, the computed global density (\cf Lemma~\ref{lem:gs}) differs across dataset settings, \eg $ 25 $ for both T-Drive and GeoLife datasets, and $49$ for Cabspotting, and this value also affects the results.
The similarity between T-Drive and Cabspotting is that both record taxi driver movements; but a difference is that the former is not concentrated on a specific area while the latter is. In GeoLife1.3 the convex bodies are more dense, having a large number of trajectories. 

As depicted in  Figure~\ref{fig:tdrive_qr_error} for the T-Drive dataset, since the data is more evenly distributed the error is very low for larger QR sizes (Figure~\ref{fig:tdrive_qr_10-100}), and is less than $20\%$ for smaller QRs (Figure~\ref{fig:tdrive_qr_1-10}). A variety of QR shapes for the smaller sizes (Figure~\ref{fig:tdrive_QR_shapes_error_bar_small}), and larger ones (Figure~\ref{fig:tdrive_QR_shapes_error_bar_large}) are depicted accordingly. For instance, 1\% QR in a $ 20 \times 20 $ partitioned-map of Beijing city could be $ (1,4) $, $ (2,2) $, $ (4,1) $ geometries, where the first coordinate represents the number of rows and the second represents number of columns. Compared to GeoLife1.3 (Figure~\ref{fig:geo13_qr_error}), since trajectories are more focused on some area (\cf Figure~\ref{fig:geo13_heatmap_mesh}), the error increases by decreasing QR size (Figure~\ref{fig:geo13_qr_1-10}). 

With regard to the Cabspotting dataset (Figure~\ref{fig:san_qr_error}), some parts of the selected area are sparser which consequently affects the result of \DP. Specifically for the QR shape of $(3,3)$ (Figure~\ref{fig:san_QR_shapes_error_bar}) and the QR sizes of 50\% and 60\% (Figure~\ref{fig:san_qr_10-100}), such QRs contain dense and sparse cells. This results in larger errors. However for larger QRs, errors cancel each other out due to the Euler formula, Equation~\eqref{eq:euler}. In all cases, \LP and \R reduce the errors, and provide a high level of accuracy. 
Since the number of spatial partitions for the chosen area is smaller than the other datasets, only QR sizes and shapes between 10\%--100\% are shown in Figures~\ref{fig:san_QR_shapes_error_bar} and \ref{fig:san_qr_10-100}. The QR errors for the smaller sizes 1\%--9\% are less than 10\%. 

\LP and \R provide similar results, and as discussed in Section~\ref{sec:approach}, the difference is the covertness property of \R. There is inconsistency in the \DP histogram results (see Section~\ref{sec:consistency_violation}). Providing consistency, through the \LP and \R techniques, can improve accuracy (\cf Sections~\ref{sec:cell_size},~\ref{sec:epsilon}). 

For the remainder of the experiments for varying other parameters, we focus results on T-Drive dataset, and the 1\% QR size as a conservative representative, since it incurs higher error. 

\begin{figure}
	\begin{minipage}[t]{0.49\textwidth}
	\centering
	\includegraphics[scale=0.25]{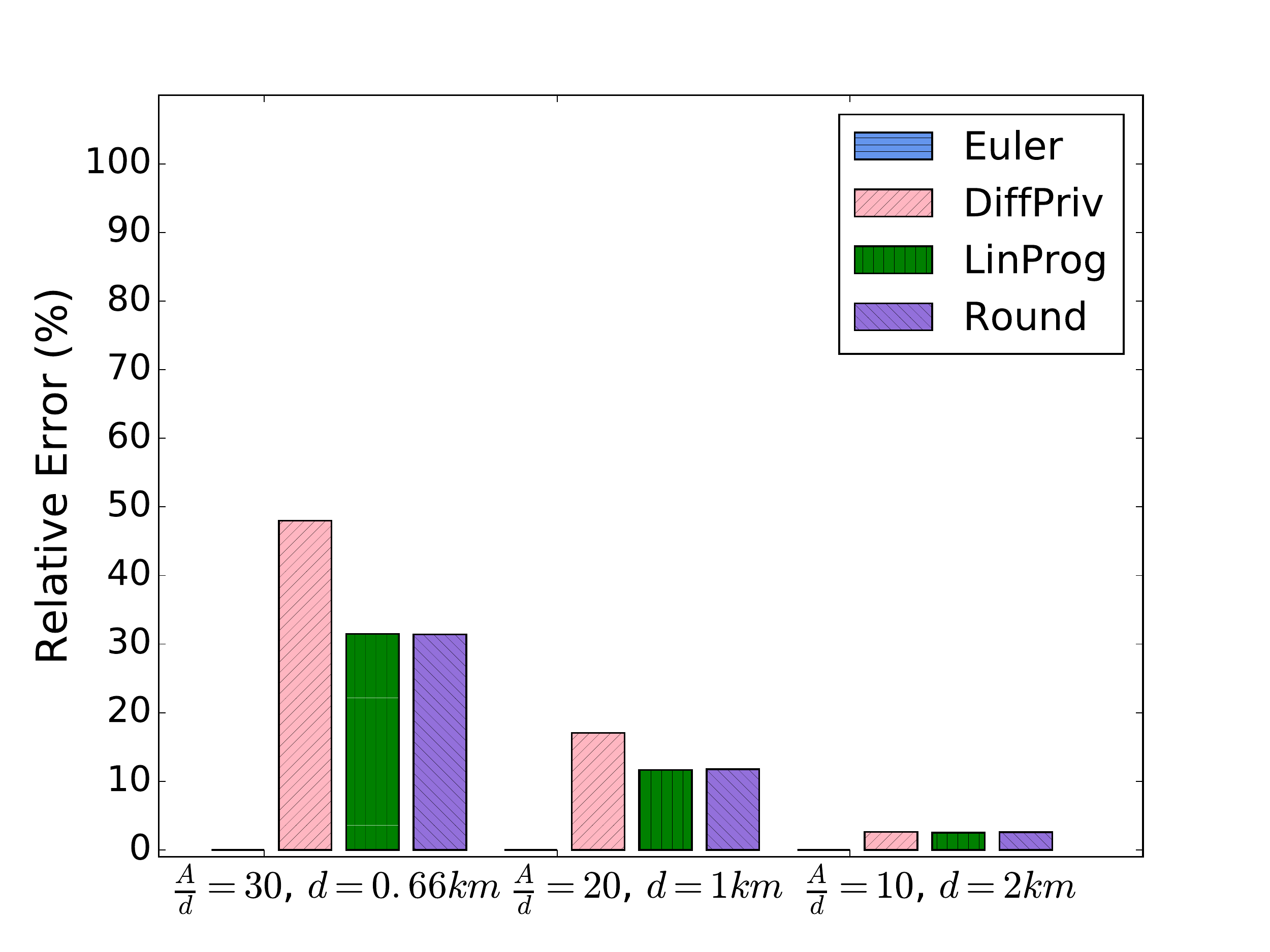}
	\caption{Varying area size/cell size ratio for T-Drive dataset.}
	\label{fig:cell_size}
	\end{minipage}\hfill
	\begin{minipage}[t]{0.49\textwidth}
	\centering
	\includegraphics[scale = 0.25]{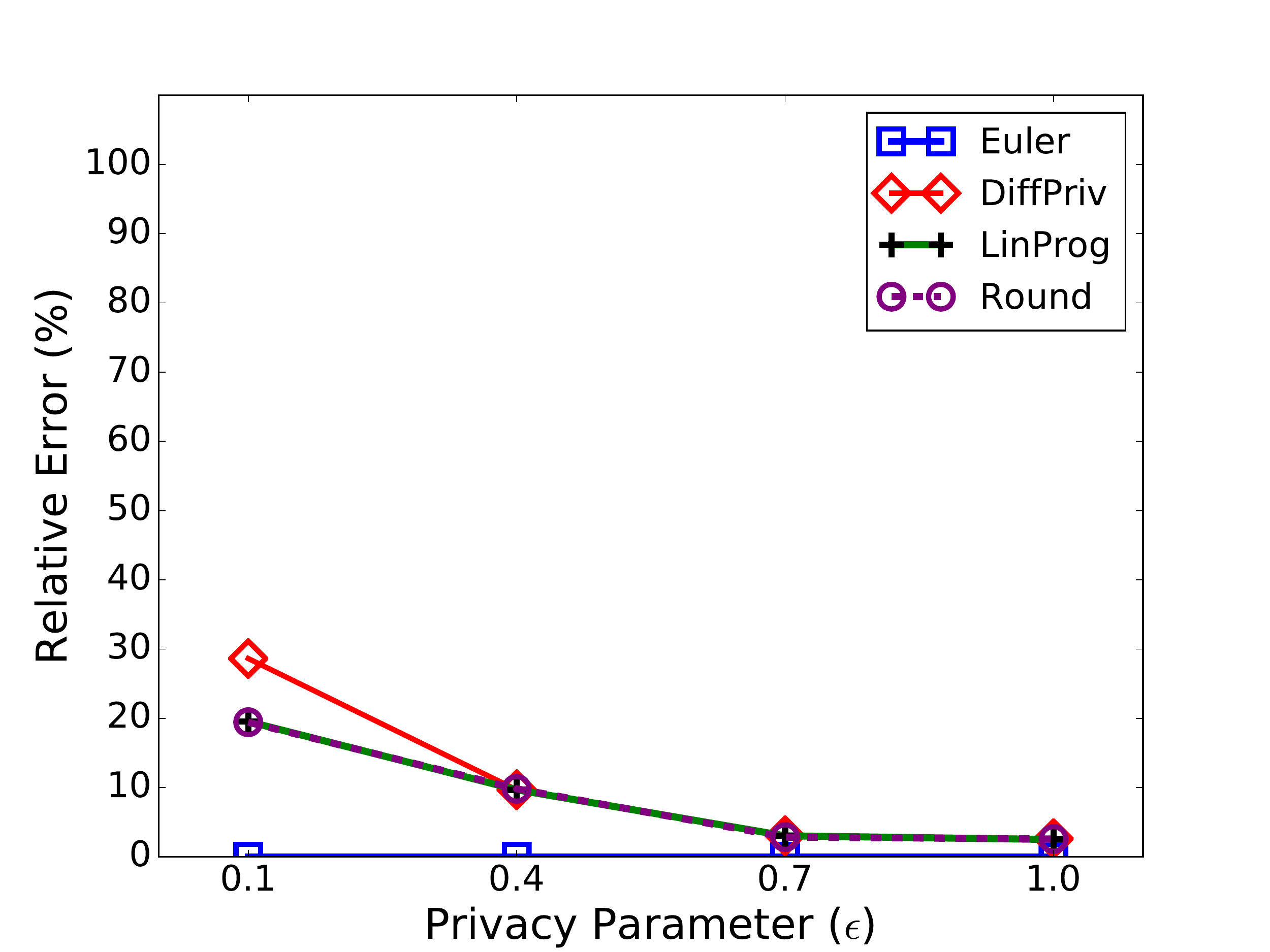}
	\caption{Varying privacy parameter for T-Drive dataset.}
	\label{fig:epsilon}
	\end{minipage}
\end{figure}

\subsection{Varying Area Size/Grid Cell Size Ratio}\label{sec:cell_size}

We vary the area size (A) over grid cell size (d) ratio and compute the median relative error for QR taken as 1\% of total area of T-Drive dataset. The area size for this dataset is $ 20km \times 20km $. By increasing the cell size, we expect that the accuracy improves, as demonstrated in Figure~\ref{fig:cell_size}. We have fixed the QR as 1\%, and varied the size of the grid cell in a range 0.66km, 1km, and 2km to yield the ratios of 30, 20, and 10 respectively. As shown, by increasing the grid cell size the accuracy increases. As illustrated in Figure~\ref{fig:cell_size}, as we decrease the grid cell size, the error increases due to higher values of global sensitivity for smaller cell sizes: $ 49 $, $ 25 $, $ 9 $ are the global sensitivity (GS) values for $ 0.66km $, $ 1km $, and $ 2km $ cell sizes respectively.
If we wish to decrease $d$ without incurring reduced accuracy, our theoretical results suggest that we should also decrease $B$ and $A$.

\subsection{Varying Privacy Parameter $ \epsilon $}\label{sec:epsilon}
We apply a similar procedure to vary the privacy parameter across values 0.1, 0.4, 0.7, and 1 with fixed QR of 1\% of the total area $ 20km \times 20km $, and cell size $ 2km $.  The effect of increasing $\epsilon$ on accuracy is depicted in Figure~\ref{fig:epsilon}. Decreasing the $\epsilon$ value from $1$, will increase the scale parameter of Laplace distribution (added noise to the counts) from $ 9 $ to $ 90 $ for $ \epsilon = 0.1 $, and this affects the accuracy of the result.
To keep accuracy relatively constant when reducing $\epsilon$, the third party can vary other parameters.

\subsection{Difference on Histograms}\label{sec:histograms_diff}
Computing the differences on histograms and their relative error to \DP show that \LP and \R are superior in terms of having less difference, while still being private. For this part of the experiment, in order to make the dataset histogram differences comparable, we kept the $A/d$ ratio fixed as discussed in Section~\ref{sec:exp-param}, for San Francisco 3.2km/0.16km and for Beijing 20km/1km.

Figure~\ref{fig:histograms_diff} depicts this comparison for all datasets, showing that on the first  concentrated dataset, \DP has a considerable difference with \LP and \R (\cf Cabspotting). This difference decreases for the relatively evenly distributed datasets (\cf GeoLife and T-Drive). \LP and \R have similar differences with \Eu.
\begin{figure}
	\begin{minipage}[t]{0.49\textwidth}
	\centering
	\includegraphics[scale = .25]{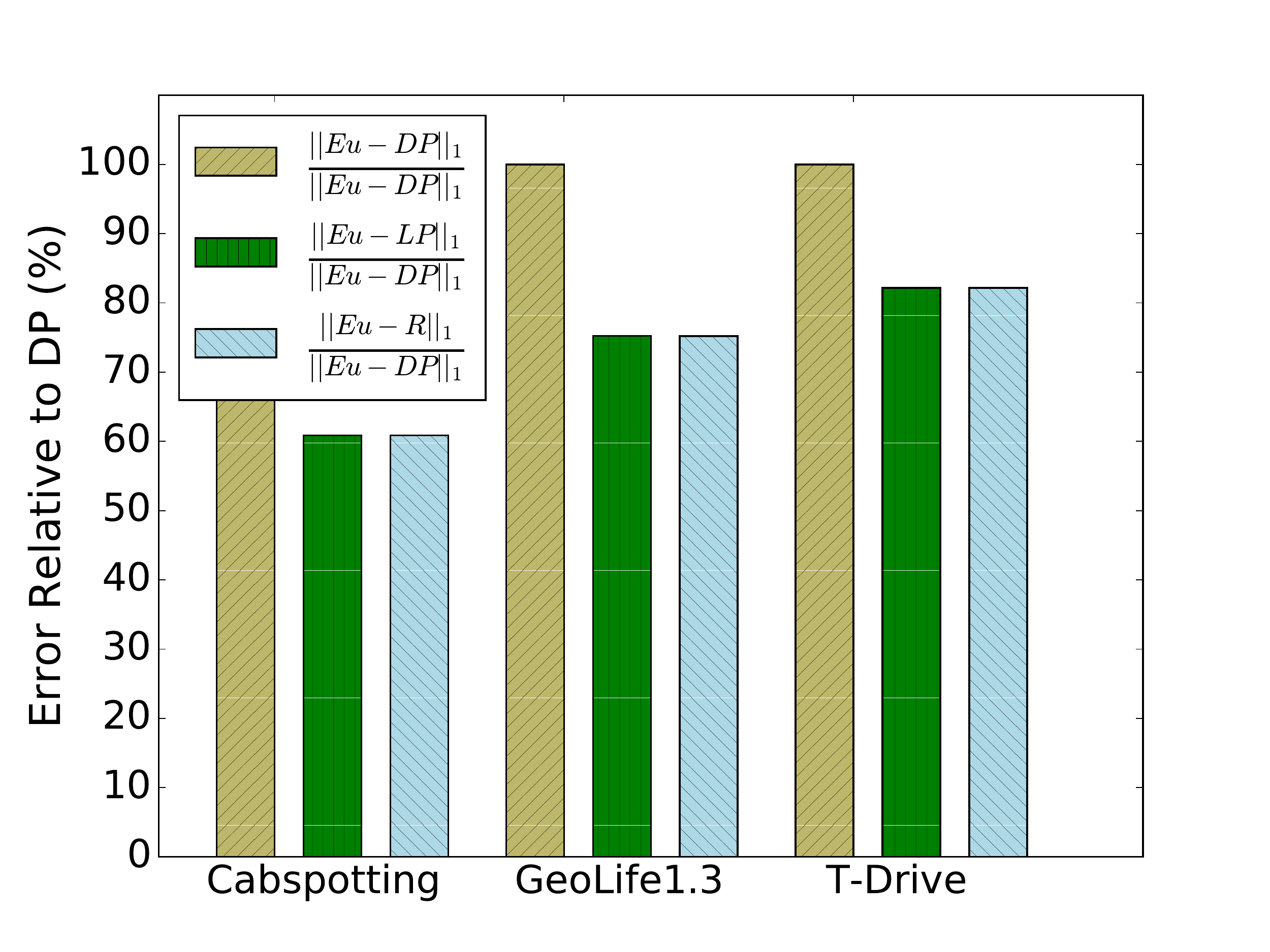}
	\caption{Differences on histograms for all datasets.}
	\label{fig:histograms_diff}
\end{minipage}\hfill
\begin{minipage}[t]{0.49\textwidth}
	\centering
	\includegraphics[scale = .3140]{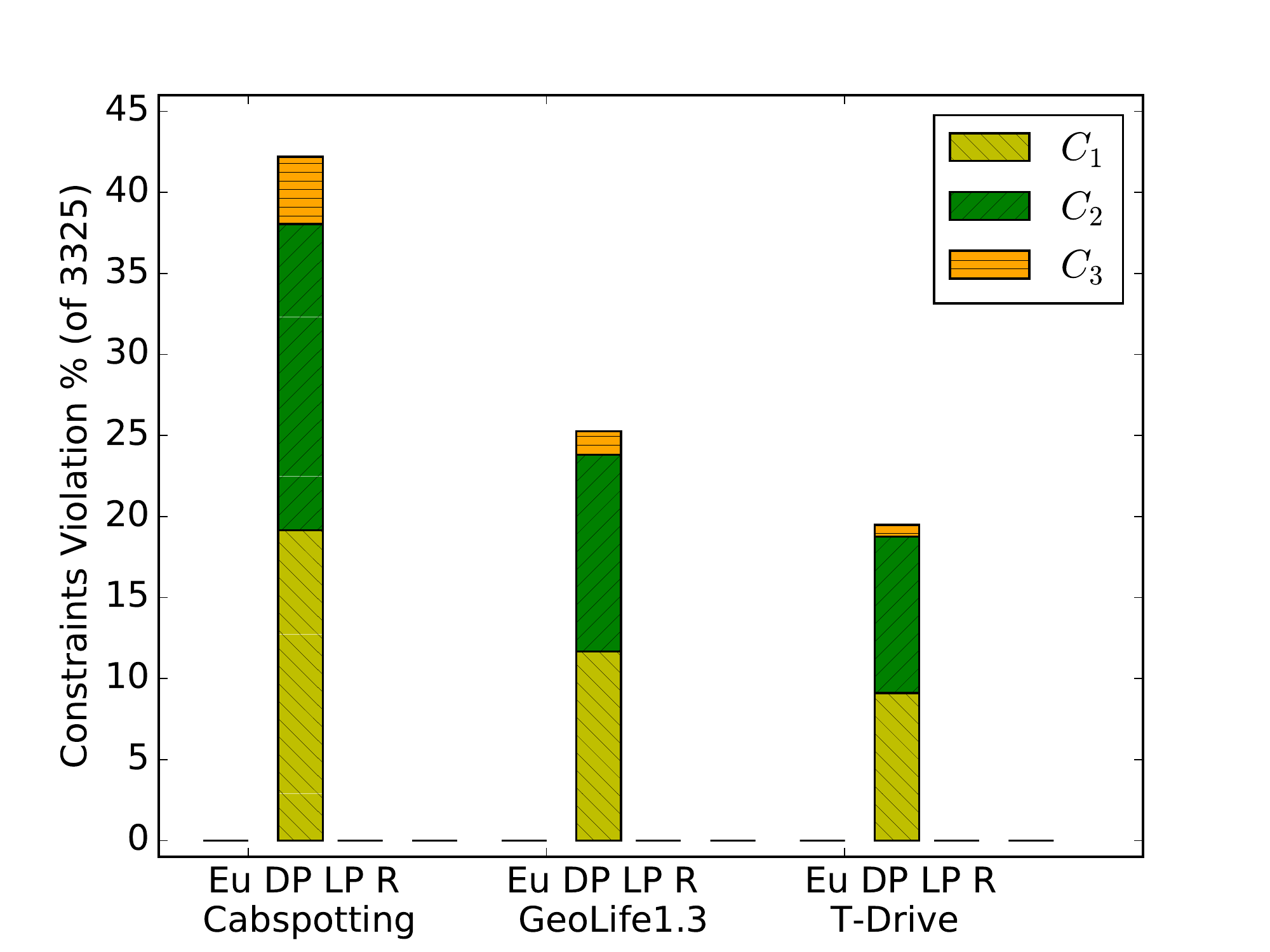}\\[0.2em]
	\caption{Consistency constrains violations for all datasets.}
	\label{fig:consistency_bar}
\end{minipage}
\end{figure}

\subsection{Consistency Constraint Violations}\label{sec:consistency_violation}
In Figure~\ref{fig:consistency_bar} the percentage of violations of constraints $C_1,C_2,C_3$ is depicted for our datasets. The total number of constraints for each of the datasets is 3325 (as we held $n$ fixed), in which 1520 are for $C_1$, 1444 for $C_2$ and 361 for $C_3$. Approximately the same proportion of the constraints are violated in each dataset, and $C_3$ is less than the other constraints, therefore it is not considerably violated. As we decrease the size of the grid cell to 0.16km in Cabspotting, the global sensitivity (\cf Definition~\ref{def:gs}) increases to 729, therefore it has a greater percentage of violation compared to the other datasets. 

\subsection{Running Time}\label{sec:timing}
Figure~\ref{fig:timing} shows running times for all datasets of various sizes. As discussed in Section~\ref{sec:exp-param}, we kept the ratio $A/d$ fixed. The running time for all the datasets are approximately similar per technique. The y-axis is in seconds (log-scale) and for the largest dataset GeoLife1.3, the total running time is $ \approx $ 196 seconds. \DP, \LP and \R take less than 1 second for all the datasets. 
\emph{Each of our algorithms are eminently practical to implement and to run.}

\begin{figure}
	\centering
	\includegraphics[scale = .34]{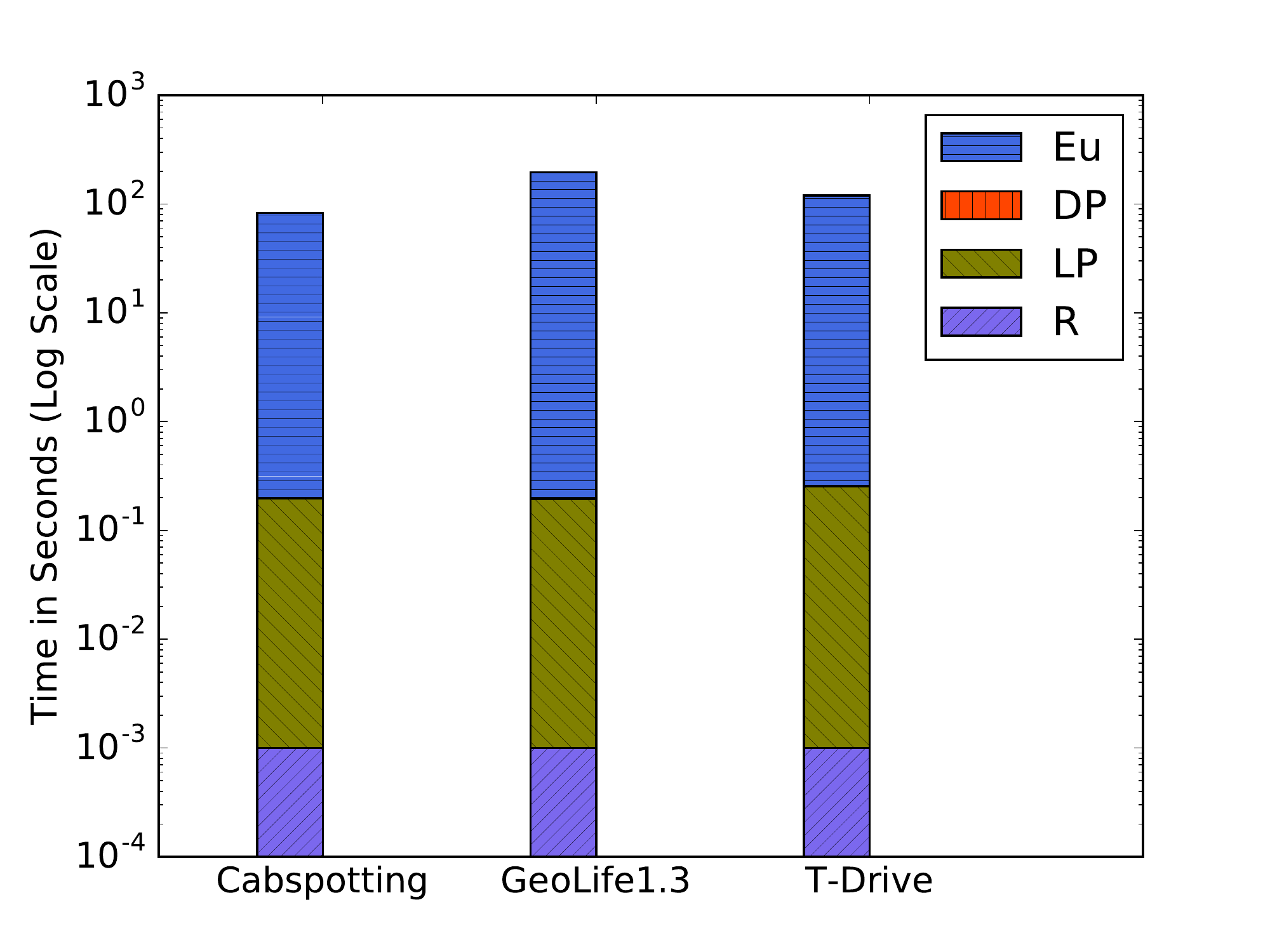}
	\caption{Running time per algorithms for all datasets.}
	\label{fig:timing}
\end{figure}

\section{Concluding Remarks}\label{sec:conclusion}
For the first time we propose a non-interactive differentially-private approach to counting planar bodies representative of users' spatial regions \eg a workout area, areas of customer preference for hotel bookings, or locations of frequent visitation for facility planning.

The key insight of our approach is to leverage Euler histograms for accurate counting, cell perturbations for differential privacy, and constrained inference smoothing to reinstate consistency. Constrained inference often improves utility by cancelling noisy perturbations. Our formulation of constrained inference is a novel constrained application of the robust method of least absolute deviations. Unlike existing constrained inference based on ordinal regression, our formulation precisely matches our privacy-preserving cell perturbation distribution according to maximum-likelihood estimates. 
By optimising for consistency while rounding cell counts, we achieve a covertness property for our counting mechanism: third parties cannot determine that we have perturbed data in the first place.

A full theoretical analysis of utility and differential privacy is complemented by experimental results on three datasets. As demonstrated in the experimental study, uniformly distributed datasets and larger grid partitions result in a better performance. The best practice to select the cell size is the smallest QR that a third party might run on an area to achieve appropriate utility.

Potential directions for future research include utilising adaptive partitioning to have varying partitions sizes according to the dataset distributions to improve the accuracy. The constraints that we have defined for the Euler histogram counts could be potentially more tight to improve utility. Finally, prior public knowledge about true counts could be incorporated into our constrained inference via regularisation that corresponds to Bayesian priors.

\begin{acknowledgements}
This work was supported in part by Australian Research Council DECRA grant DE160100584.
\end{acknowledgements}

\bibliographystyle{spmpsci}
\bibliography{refs}

\end{document}